\def\beq{\begin{equation}}
\def\eeq{\end{equation}}
\begin{document}
\def\Sym{{\rm Sym}}
\def\R{\mathbb R}
\def\N{\mathbb N}
\def\C{\mathbb C}
\def\Z{\mathbb Z}
\def\Q{\mathbb Q}
\def\bigat{\mbox {\Large (}}
\def\bigct{\mbox {\Large )}}
\def\Ker{{\cal N}}
\def\dom{{\rm dom}\;}
\def\supp{{\rm supp}\;}
\def\pf{\rm pf}
\def\arccot{{\rm arccot}}
\def\nequiv{\equiv \!\!\!\!\! / \,\, }
\def\tr{{\rm tr}\,}
\def\sgn{{\rm sgn}\,}
\def\vol{{\rm vol}}
\def\sym{{\rm sym}}
\def\skw{{\rm skw}}
\def\sp{{\rm sp}}
\def\dual{{\rm dual}}
\def\ctg{{\rm cot}}
\def\cot{{\rm cot}}
\def\sec{{\rm sec}}
\def\csc{{\rm csc}}
\def\diag{{\rm diag}}
\def\arcos{{\rm arccos}}
\def\bigaq{\mbox {\Large [}}
\def\bigcq{\mbox {\Large ]}}
\def\cA{{\mathcal A}}
\def\cB{{\mathcal B}}
\def\cC{{\mathcal C}}
\def\cD{{\mathcal D}}
\def\cE{{\mathcal E}}
\def\cF{{\mathcal F}}
\def\cG{{\mathcal G}}
\def\cH{{\mathcal H}}
\def\cX{{\mathcal X}}
\def\cI{{\cal I}}
\def\cL{{\mathcal L}}
\def\cN{{\cal N}}
\def\cO{{\cal O}}
\def\cR{{\mathcal R}}
\def\cP{{\cal P}}
\def\cS{{\cal S}}
\def\cT{{\cal T}}
\def\cU{{\cal U}}
\def\cV{{\cal V}}
\def\a{[\!\!\,[}
\def\c{]\!\!\,]}
\def\one{1\!\!\,{\rm l}}
\def\m{{\scriptscriptstyle -}}
\def\p{{\scriptscriptstyle +}}
\def\pp{{\scriptscriptstyle {++}}}
\def\sfot{{\scriptstyle{\frac 1 2}}}
\def\sfoth{{\scriptstyle{\frac 1 3}}}
\def\sfof{{\scriptstyle{\frac 1 4}}}
\def\sftth{{\scriptstyle{\frac 2 3}}}
\def\crux{\,{\times \!\!\!\!\! \times}\,}
\def\lbox{\mbox{\raisebox{-.25ex}{\Large$\Box$}}}
\def\boldeta{\mbox{\boldmath $\eta$}}
\def\bepsilon{\mbox{\boldmath $\epsilon$}}
\def\bgamma{\mbox{\boldmath $\gamma$}}
\def\bomega{\mbox{\boldmath $\omega$}}
\def\btheta{\mbox{\boldmath $\theta$}}
\def\bphi{\mbox{\boldmath $\phi$}}
\def\bpsi{\mbox{\boldmath $\psi$}}
\def\bxi{\mbox{\boldmath $\xi$}}
\def\bvarphi{\mbox{\boldmath $\varphi$}}
\def\bvarpi{\mbox{\boldmath $\varpi$}}
\def\bchi{\mbox{\boldmath $\chi$}}
\def\btau{\mbox{\boldmath $\tau$}}
\def\blambda{\mbox{\boldmath $\lambda$}}
\def\bkappa{\mbox{\boldmath $\kappa$}}
\def\bmu{\mbox{\boldmath $\mu$}}
\def\bnu{\mbox{\boldmath $\nu$}}
\def\bsigma{\mbox{\boldmath $\sigma$}}
\def\bvarepsilon{\mbox{\boldmath $\varepsilon$}}
\def\bPi{\mbox{\boldmath $\Pi$}}
\def\bPhi{\mbox{\boldmath $\Phi$}}
\def\bGamma{\mbox{\boldmath $\Gamma$}}
\def\bLambda{\mbox{\boldmath $\Lambda$}}
\def\bOmega{\mbox{\boldmath $\Omega$}}
\def\bXi{\mbox{\boldmath $\Xi$}}
\def\bUpsilon{\mbox{\boldmath $\Upsilon$}}
\def\bcA{\mbox{\boldmath ${\mathcal A}$}}
\def\bcB{\mbox{\boldmath ${\cal B}$}}
\def\bcS{\mbox{\boldmath ${\cal S}$}}
\def\bcU{\mbox{\boldmath ${\cal U}$}}
\def\bcE{\mbox{\boldmath ${\cal E}$}}
\def\bcE{\pmb{\mathcal E}}
\def\bcF{\pmb{\mathcal F}}
\def\bcG{\pmb{\mathcal G}}
\def\bcT{\pmb{\mathcal T}}
\def\bcB{\pmb{\mathcal B}}
\def\bcD{\pmb{\mathcal D}}
\def\bcJ{\pmb{\mathcal J}}
\def\bcY{\pmb{\mathcal Y}}
\def\bcQ{\pmb{\mathcal Q}}
\def\bcH{\pmb{\mathcal H}}
\def\bcK{\pmb{\mathcal K}}
\def\bcV{\pmb{\mathcal V}}
\def\cF{{\mathcal F}}
\def\cG{{\mathcal G}}
\def\cH{{\mathcal H}}
\def\cL{{\mathcal L}}
\def\baq{\mbox{\bf [}}
\def\bcq{\mbox{\bf ]}}
\def\bu{{\bf u}}
\def\bv{{\bf v}}
\def\bx{{\bf x}}
\def\bp{{\bf p}}
\def\bb{{\bf b}}
\def\bc{{\bf c}}
\def\bg{{\bf g}}
\def\bj{{\bf j}}
\def\bog{{\bf g}}
\def\bX{{\bf X}}
\def\bE{{\bf E}}
\def\bd{{\bf d}}
\def\be{{\bf e}}
\def\bi{{\bf i}}
\def\bl{{\bf l}}
\def\bQ{{\bf Q}}
\def\bY{{\bf Y}}
\def\bF{{\bf F}}
\def\bT{{\bf T}}
\def\bG{{\bf G}}
\def\bD{{\bf D}}
\def\bB{{\bf B}}
\def\bH{{\bf H}}
\def\bI{{\bf I}}
\def\bK{{\bf K}}
\def\bL{{\bf L}}
\def\bO{{\bf O}}
\def\bP{{\bf P}}
\def\bM{{\bf M}}
\def\bm{{\bf m}}
\def\bq{{\bf q}}
\def\bJ{{\bf J}}
\def\bC{{\bf C}}
\def\bR{{\bf R}}
\def\bS{{\bf S}}
\def\bW{{\bf W}}
\def\bw{{\bf w}}
\def\bZ{{\bf Z}}
\def\bz{{\bf z}}
\def\bn{{\bf n}}
\def\bN{{\bf N}}
\def\bs{{\bf s}}
\def\ba{{\bf a}}
\def\bt{{\bf t}}
\def\by{{\bf y}}
\def\br{{\bf r}}
\def\bh{{\bf h}}
\def\bk{{\bf k}}
\def\bo{{\bf o}}
\def\sq{{\scriptscriptstyle q}}
\def\sA{{\scriptscriptstyle A}}
\def\sB{{\scriptscriptstyle B}}
\def\sC{{\scriptscriptstyle C}}
\def\sD{{\scriptscriptstyle D}}
\def\sE{{\scriptscriptstyle E}}
\def\sF{{\scriptscriptstyle F}}
\def\sK{{\scriptscriptstyle K}}
\def\sL{{\scriptscriptstyle L}}
\def\sM{{\scriptscriptstyle M}}
\def\sN{{\scriptscriptstyle N}}
\def\sT{{\scriptscriptstyle T}}
\def\sG{{\scriptscriptstyle G}}
\def\sO{{\scriptscriptstyle O}}
\def\sP{{\scriptscriptstyle P}}
\def\sQ{{\scriptscriptstyle Q}}
\def\sR{{\scriptscriptstyle R}}
\def\sS{{\scriptscriptstyle S}}
\def\sbT{{\scriptscriptstyle \bT}}
\def\bsN{\mbox{\boldmath ${\mathsf N}$}}
\def\bsM{\mbox{\boldmath ${\mathsf M}$}}
\def\bsK{\mbox{\boldmath ${\mathsf K}$}}
\def\bsH{\mbox{\boldmath ${\mathsf H}$}}
\def\bsX{\mbox{\boldmath ${\mathsf X}$}}
\def\bsF{\mbox{\boldmath ${\mathsf F}$}}
\def\bsG{\mbox{\boldmath ${\mathsf G}$}}
\def\bsC{\mbox{\boldmath ${\mathsf C}$}}
\def\bsA{\mbox{\boldmath ${\mathsf A}$}}
\def\bsB{\mbox{\boldmath ${\mathsf B}$}}
\def\bsP{\mbox{\boldmath ${\mathsf P}$}}
\def\bsI{\mbox{\boldmath ${\mathsf I}$}}
\def\bsL{\mbox{\boldmath ${\mathsf L}$}}
\def\bsJ{\mbox{\boldmath ${\mathsf J}$}}
\def\bs0{\mbox{\boldmath ${\mathsf O}$}}
\def\sbcE{{\pmb{\scriptscriptstyle \mathcal E}}} 
\def\bU{{\bf U}}
\def\bV{{\bf V}}
\def\bof{{\bf f}}
\def\vp{\varphi}
\def\bSig{{\bf Sigma}}
\def\bPsi{{\bf Psi}\!\!\!\!\!@!@!{\bf Psi}}
\def\cvd{$\qquad\Box$}
\def\vcvd{\hfill $\sqcap \!\!\!\!\sqcup$}
\def\bone{{\bf 1}}
\def\bzero{{\bf 0}}
\def\curl{\nabla \crux}
\def\dive{\nabla \cdot}
\def\circo{\buildrel\circ\over}
\def\diamo{\buildrel\diamond\over}
\def\bA{{\bf A}}
\def\nonimplica{\Longrightarrow \!\!\!\!\!\!\!\! /\quad}
\def\para{{\scriptscriptstyle{\parallel}}}
\def\perpe{{\scriptscriptstyle{\perp}}}
\def\Res{{\rm Res}}
\def\sinc{{\rm sinc}}
\def\scrE{\mathscr{E}}
\def\scrT{\mathscr{T}}
\def\scrS{\mathscr{S}}
\def\rR{{\rm R}}
\def\nablaR{\nabla\!_{\sR}\,}
\def\mT{{\mathfrak{T}}}
\def\mE{{\mathfrak{E}}}
\def\mS{{\mathfrak{S}}}
\def\msfS{\mbox{${\mathsf S}$}}
\def\strianup{{\scriptscriptstyle{\bigtriangleup}}}
\def\striandown{{\scriptscriptstyle{\bigtriangledown}}}
\def\triangoloup{\buildrel\strianup\over}
\def\triangolodown{\buildrel\striandown\over}
\def\SSQ#1#2{ {\vbox{\hrule height#2pt\hbox{\vrule width#2pt height#1pt
    \kern#1pt \vrule width#2pt }\hrule height#2pt}}\smallskip\relax\rm}
\def\cvd{\SSQ{2.5}{.4}}
\def\quadro{\buildrel\cvd\over}
\def\quadrato{\buildrel\cvd\over}
\def\triangolidown{\buildrel{\striandown\striandown}\over}
\def\sBox{{\scriptscriptstyle{\Box}}}
\def\quadri{\buildrel{\cvd\,\cvd}\over}
\def\tcr{\textcolor{red}}
\newtheorem{remark}{Remark}[section]
\newtheorem{proposition}{Proposition}[section]
\newtheorem{theorem}[proposition]{Theorem}
\newtheorem{corollary}[proposition]{Corollary}
\newtheorem{lemma}[proposition]{Lemma}

\title[Nonlinear and nonlocal models of heat conduction]
{Nonlinear and nonlocal models of\\ heat conduction in  continuum thermodynamics}

\author[C. Giorgi]{Claudio Giorgi}%
\address{Universit\`{a} di Brescia , DICATAM
\newline \indent Via Valotti 9, 25133
Brescia,
Italia.}%
\email{claudio.giorgi@unibs.it}%


\author[F. Zullo]{Federico Zullo}%
\address{Universit\`{a} di Brescia , DICATAM
\newline \indent Via Valotti 9, 25133
Brescia,
Italia,
\newline \indent \& INFN, Milano Bicocca,
\newline \indent Piazza della Scienza 3, 20126, Milano, Italia}%
\email{federico.zullo@unibs.it}%

\subjclass [2000]{80}%
\keywords{Heat conduction; Nonlinear models; Rate-type heat flux; Thermodynamics; Moore-Gibson-Thompson temperature equation}%

\begin{abstract}
 The aim of this paper is to develop a general constitutive scheme within continuum thermodynamics to describe the behavior of heat flow in deformable media. Starting from a classical thermodynamic approach, the rate-type constitutive equations are defined in the material (Lagrangian)  description where the standard time derivative satisfies the principle of objectivity. 
All constitutive functions are required to depend on a common set of independent variables and to be consistent with thermodynamics. The statement of the Second Law is formulated in a general nonlocal form, where the entropy production rate is prescribed by a non-negative constitutive function and the extra entropy flux obeys a no-flow boundary condition. 
The thermodynamic response is then developed based on Coleman-Noll procedure. In the local formulation, the free energy potential and the rate of entropy production function are assumed to depend on temperature, temperature gradient and heat-flux vector along with their time derivatives.
This approach results in rate-type constitutive equations for the heat-flux vector that are intrinsically consistent with the Second Law and easily amenable to analysis.
A huge class of linear and nonlinear models of the rate type are recovered
(e.g., Cattaneo-Maxwell's, Jeffreys-like, Green-Naghdi's, Quintanilla's and Burgers-like heat conductors). 
In the (weakly) nonlocal  formulation of the second law, both the entropy production rate and an entropy extra-flux vector  are assumed to depend on temperature, temperature gradient and heat-flux vector along with their spatial gradients and time derivatives. Within this (classical) thermodynamic framework the nonlocal Guyer-Krumhansl model and some nonlinear generalizations devised by Cimmelli and Sellitto are obtained.
 \end{abstract}
\thanks{The research leading to this work has been developed under the auspices of INDAM-GNFM. F.Z. acknowledges also the support of INFN, Gr. IV - Mathematical Methods in NonLinear Physic.}

\maketitle

\section{Introduction}
In the last years, numerous heat conduction models beyond Fourier have been developed to account for relaxational and nonlocal effects, fast phenomena or wave propagation, such as being typical for biological systems, nanomaterials or nanosystems. Non-Fourier models mainly differ for their various thermodynamic backgrounds (thermodynamics of irreversible processes, extended irreversible thermodynamics, etc.). A challenging question is their possible compatibility with the Second Law. In the book of Straughan \cite{Straughan} many of them are presented  and discussed in connection with wave propagation properties. A recent review by Kov\'acs \cite{Kovacs} discusses properties concerning their possible practical applications in light of experiments.
This article aims to discuss their deduction in the context of classical Continuum Thermodynamics and possible compatibility with the Second Law stated therein.

Fourier's law gives a macroscopic description of the microscopic phenomena associated with heat diffusion and is an excellent approximation at length scales much greater than the mean free path and at time scales much greater than the thermal relaxation time. Nevertheless, one of the predicted results of the Fourier law is that  temperature disturbances propagate at infinite speeds. This violates the law of special relativity, and also, since in metals the heat conduction is typically attributed to the migration of free electrons and in semiconductors it is explained by the migration of phonons, which are collective vibrations of the atoms in the crystal structure, it would be a reasonable requirement that the propagation speed proves to be finite. This view has spurred much academic interest in the last half century towards seeking a model compatible with a finite speed of propagation (see, e.g., \cite{Straughan,BL1995, Cattaneo48,GN2,GP,JP} and refs therein).  

The first non-classical heat conduction model capable of predicting the propagation of heat waves is the Maxwell-Cattaneo-Vernotte law (see \cite{Cattaneo48,Cattaneo58,Vernotte}).
The MCV theory is based on a
\emph{rate-type} constitutive equation for the heat flux that predicts heat-wave propagation.
Nevertheless, in the recent past some controversy raised about the non-objective character of
the constitutive equation which limits its application to rigid bodies at rest only.
Several efforts have been devoted to circumvent such a difficulty (cf.\ \cite{Ch, Morro_ORE, Morro_HC} and references therein).
Yet the MCV model suffer from some other drawbacks. In particular, the non-negativity of the absolute temperature at any time is not preserved (see, e.g., \cite{BZ2009,FGM_frac,Rukolaine}).

A new class of models for heat conduction in a rigid body  has been developed
in the nineties by Green and Naghdi \cite{GN}. In the framework of their general theory, the
propagation of thermal waves at finite speed is allowed. Unlike the MCV model, their constitutive equations are completely immune from criticism of  objectivity,
since do not contain the material derivative of any vector field.
Green and Naghdi proposed
three types of models, named \emph{type I, type II} and \emph{type
III}, respectively, the latter being the most general, which (formally) includes the others
as particular instances.
An empirical temperature scale is used, not necessarily the absolute one.

In the past decades, Green-Naghdi's theories have attracted growing interest. In particular, Green-Naghdi type III (GN III) theory  owes its success to the capability
of describing heat propagation by means of thermal waves in addition to diffusive propagation and for this reason it  has been applied in a number of disparate physical circumstances, where propagation of heat is coupled with elastic deformations of solids, flow of viscoelastic fluids, etc. (see \cite{Straughan,GN93,QS} and references therein).
On the other hand, some criticisms have been raised about the GN III heat conduction theory \cite{Bargmann,BFPG,BFPG0}. Mainly, it has not been demonstrated that the internal dissipation of entropy (which is assumed to be the subject of a constitutive prescription) is non-negative, as required by the Second Law. Moreover, it has been observed that the GN III model fits well to propagation processes of finite duration (transient regime), but leads to unrealistic effects, at least when either asymptotic or stationary phenomena are involved \cite{GGP}.

Recently, Quintanilla \cite{Quintanilla} proposed a new thermoelastic heat conduction model which emerged from the development of the GN III by adding a relaxation factor. This theory gives rise to a third-order differential equation for the temperature that looks like the linearized Moore-Gibson-Thompson (MGT) equation in high intensity ultrasound \cite{Thompson}. A similar equation had already been obtained by Joseph and Preziosi starting from the linearization of heat conduction with memory according to the Gurtin-Pipkin model (see \cite[eqn.(5.7)]{JP}). 
Despite the advantages and wide diffusion of this new theory \cite{Quintanilla1,Abou,BFQ,SM1,SM}, a convincing proof of its thermodynamic consistency is currently lacking.

Guyer and Krumhansl \cite{GK} studied the heat wave propagation in dielectric crystals at low temperature. They observed that in the regime of low temperature the heat flux $\bq$ is proportional to the momentum flux of the phonon gas. On the basis of kinetic theory they found a macroscopic equation governing its evolution. 
When the relaxation time is negligible,  the Guyer-Krumhansl (GK) equation  reduces to a nonlocal perturbation of the Fourier law  (see \cite{GK2}). 
The possibility of utilizing a continuum approach to derive the Guyer-Krumhansl equation results in a model with a wide range of validity since it allows to fit the parameters to the observed phenomenon.
Considering an undeformable medium, the GK equation has been derived within the framework of classical irreversible thermodynamics with internal variables \cite{CR,CR1},  GENERIC \cite{SPKFVG} and extended irreversible thermodynamics \cite{JCVL}. 
A simple nonlinear extension of the GK model was proposed and scrutinized in \cite{CSJ_2009,CSJ_2010,CSJ_2010bis}. Such a model illustrates relaxational and nonlocal effects of the heat flow in nanosystems. We are not aware whether the GK model and its  consistency with the Second Law have been obtained in the context of classical continuum thermodynamics. Moreover, it is not entirely clear which boundary conditions should be assigned for the heat flux.

\subsection{Aims and plan of the paper}
In this work we propose a new approach to heat conduction theories that is inherently thermodynamic, as it originates directly from the Clausius-Duhem inequality. 
The {\it specific production of entropy} $\sigma$ enters as a non-negative constitutive function, so that the Second Law is automatically satisfied.
The idea that  $\sigma$ be given by a constitutive equation traces back to Green and Naghdi \cite{GN}. 
They assumed from the outset that all constitutive functions depend on the common set of (physical) variables. 
Conceptually our approach extends this idea. 

The set of independent variables includes only the macroscopically observable fields and their temporal and spatial derivatives, without making any recourse to internal variables or ambiguous state variables (such as thermal displacement).  In particular, no constitutive prescription on the energy influx $\bq$ is made, rather it is treated as an independent variable. The Coleman-Noll procedure \cite{CN63} is applied to derive thermodynamic restrictions on  the Helmholtz free energy $\psi$. However, unlike the Green-Naghdi theories, here $\sigma$ is independent of the constitutive prescription of  the free energy $\psi$ and, what is more, it is proved to be non-negative along whatever process.  
 
 This strategy allows us to increase the generality of the models considered here, for instance new classes of rate-type constitutive equations for the heat flux.
Since time derivatives are involved, we adopt a material description in order to avoid the problem of their objectivity. Therefore, this approach is compatible with both rigid and deformable bodies.
We mention that a similar approach was developed in \cite{MCS2,CMT} for deformable ferroelectrics, in \cite{MorroGiorgi_JELAS} for elastic-plastic materials and in \cite{MorroGiorgi_MDPI, GM_Materials} for viscoelastic and viscoplastic materials. 

We consider both local and (weakly) nonlocal models of heat conduction. 
Constitutive local theories  have to satisfy the classical local formulation of the Second Law where
the specific entropy production $\sigma$ is  prescribed as a constitutive function in terms of the independent variables and is required to be non-negative in all processes. 
Depending on the choice of the set of variables (especially the order of the time derivatives of $\bq$ involved) we consider different classes of rate-type models (basic rather than first, second or higher  order). Among others, we obtain GNIII and Quintanilla models in their differential form, and their thermodynamic consistency is shown by exhibiting a (non-unique) explicit expressions of $\psi$ and $\sigma$.  Then, a more general heat flow model inspired by the Burgers fluid is proposed. Its thermodynamic consistency, in terms of inequalities involving the material parameters of the model, is proved by assuming a quadratic form of the free energy.

As for non-local theories, we refer to Green and Laws \cite{GrL} who first investigated the restrictions imposed on constitutive equations by a global entropy inequality. According to \cite{FGM_PhysicaD} we reformulated the Green-Laws pioneering idea by arriving at an entropy equation in which the entropy production density $\rho\sigma$ is the sum of two terms; an internal entropy production, $\rho\zeta$, which is required to be non-negative along whatever process, and the divergence of a vector field $\bk$, called {\it extra entropy flux}, whose flow across the boundary of the body is zero. Both  $\zeta$ and $\bk$ are the object of a constitutive prescriptions in terms of the same independent variables as  $\psi$. Among others, linear and nonlinear Guyer-Krumhansl-like heat conduction models are derived using this scheme and their thermodynamic consistency is shown by exhibiting an explicit expressions of $\psi$, $\zeta$ and $\bk$. Restrictions imposed on $\bq$ and its gradient by the no-flow boundary condition for $\kappa$ are discussed. In our opinion these conditions provide an important suggestion for the choice of the most appropriate boundary conditions in applications.
Finally, a connection with the evolution equation of the heat flux in the framework of extended irreversible
thermodynamics (see \cite{JCVL}) is established.

\section{Balance laws and the thermodynamic principles}

We consider a body occupying the time dependent region $\Omega_t \subset \cE^3$.
The motion is described by function $\bchi(\bX, t)$ providing the position vector $\bx \in \Omega_t$ in terms of the position vector $\bX$, in a reference configuration $\cR$, and the time $t$, so that 
\[\bx=\bchi(\bX, t),\qquad  \Omega_t = \bchi(\cR, t).\]
The deformation is described by means of the deformation gradient 
\[\bF (\bX, t)=\nablaR  \bchi(\bX, t),\qquad (\hbox{in suffix notation }F_{iK} = \partial_{X_K}\chi_i),\]
satisfying the constraint $J:=\det \bF>0$.
Here, $\nablaR:=\partial_\bX $ denotes the gradient in the reference configuration $\cR$, whereas the symbol $\nabla:=\partial_\bx$ denotes the gradient in the current configuration $\Omega$. For any regular vector field $\bw(\bx,t)$, they are related  as follows
 \[\nablaR\hat\bw=\nabla\bw\bF, \qquad \nabla\bw=\nablaR\hat\bw\bF^{-1},
\]
  where $\hat\bw=\bw\big(\bchi(\bX,t),t\big)$. In addition, using the Nanson's formula, we have
 \beq \nablaR\cdot\hat\bw=J\nabla\cdot\left(J^{-1} \bF\bw\right), \qquad \nabla\cdot\bw=J^{-1}\nablaR\cdot(J \bF^{-1}\hat\bw). \label{nabla_rel}\eeq

  Hereafter, a superposed dot denotes the standard derivative with respect to time. In particular, 
 \[\dot f(\bX,t):=\partial_t f(\bX,t),\qquad \dot g(\bx,t):=\frac{\,d}{dt}g(\bchi(\bX,t),t)=\partial_t g(\bx,t)+ (\bv\cdot\nabla)g(\bx,t),\]
 where $f$ and $g$ are scalar-, vector- or tensor-valued differentiable functions of the reference and current position, respectively. The velocity field $\bv(\bx,t)$ is such that $\bv(\bchi(\bX,t),t)=\partial_t\bchi(\bX,t)$ and the velocity gradient $\bL:=\nabla\bv$ is related to $\dot\bF$ as follows
  \beq \bL=\dot\bF\bF^{-1}.
 \label{vel_grad} 
 \eeq

Let $\varepsilon$ be the internal energy density (per unit mass), $\bT$  the Cauchy stress, $\bq$ the heat flux vector, $\rho$ the mass density, $r$ the (external) heat supply and $\bb$ the mechanical body force per unit mass. 
The local form of the linear momentum and internal energy balance equations can be written as
\beq
\rho \dot{\bv} = \nabla \cdot \bT + \rho \bb ,
\label{eq:1motion}\eeq
\beq 
\rho \dot{\varepsilon} =\bT \cdot \bL - \nabla \cdot \bq + \rho r.\label{eq:en} \eeq
The total energy balance, also named  First Law of continuum thermodynamics, follows from taking the dot product of  \eqref{eq:1motion} by $\bv$ and then adding the result to \eqref{eq:en},
\[\rho\big[\tfrac12|\bv|^2+{\varepsilon}\big]^\cdot = - \nabla \cdot \big[\bq -\bT\bv\big] + \rho[\bb\cdot\bv+ r].\]

\medskip\noindent
{\bf Balance of entropy.}
{\it Let  ${\mathcal P}_t$ be any convecting subregion of $\Omega_t$.
 All processes which are compatible with the balance equations \eqref{eq:1motion}-\eqref{eq:en} must satisfy the following integral equation, }
\[
	\frac{d}{dt} \int_{{\mathcal P}_t} \rho\, \eta \,dv
	 =  -\int_{\partial{\mathcal P}_t} {\bh}\cdot \bn \,da
	 +  \int_{{\mathcal P}_t} \rho s\, dv,
\]
{\it where $\eta$ is the entropy density and $s$ the entropy supply (per unit mass), ${\bh}$ the entropy-flux vector, $\bn$ the outward normal to the boundary.}

\smallskip
This general statement has been adopted by M\"uller \cite{Muller85} by allowing the entropy flux $\bh$ to be an unknown function.
The classical form of the Second Law, usually named after Clausius-Duhem, is obtained by letting
$$
{\bh}= \frac{\bq}{\theta},\qquad s=\frac{r}{\theta}+\sigma,
$$
where $\theta$ denotes the (positive) absolute temperature and the quantity $\sigma$ is referred to as {\it specific entropy production} \cite{CN63,Muller71}.
Exploiting the arbitrariness of the convecting domain ${\mathcal P}_t$, we obtain the local form of the entropy balance,
\begin{equation}
\label{loc_eta_ineq}
\rho \dot{\eta}  + \nabla\cdot\left( \frac{\bq}{\theta}  \right)- \frac{\rho r}{\theta}=\rho\sigma.
\end{equation}

Within continuum thermodynamics two versions of the Second Law  can be established depending on the assumptions about $\sigma$.

\medskip\noindent
{\bf Second Law of Thermodynamics (local form).} {\it Along all compatible processes the specific entropy production $\sigma$  is nonnegative,}
\beq
\sigma(\bx,t)\ge0.
\label{strong_ineq}\eeq
Owing to \eqref{loc_eta_ineq} and \eqref{strong_ineq}, this statement leads to  the classical form of the Clausius-Duhem inequality 
\[\rho \dot{\eta}  + \nabla\cdot\left( \frac{\bq}{\theta}  \right)- \frac{\rho r}{\theta}\ge0.\]

A more general formulation was introduced and discussed in  \cite{GrL}; it can be summarized as follows.

\medskip\noindent
{\bf Second Law of Thermodynamics (nonlocal weak form).} {\it The specific entropy production  $\sigma$ is 
 assumed to satisfy the integral condition}
\begin{equation}
\label{II_law} 
 \int_{\Omega_t} \rho\, \sigma \,dv\ge0.
\end{equation}
This statement  is weaker than \eqref{strong_ineq} because the inequality is assumed to be valid only over the entire domain $\Omega_t$  and not point-wise. 

Now, let  $\zeta$ be a nonnegative unknown field and $\bk $ be an unknown regular field, usually called {\it entropy extra flux}, such that
\[
\int_{\Omega_t}\nabla \cdot \bk \,dv =  \int_{\partial \Omega_t} \bk \cdot \bn\, da = 0,
\]
which means that the net extra flux across the whole body boundary is zero. This condition is also relevant from a physical point of view since the flux of $\bk$ at the boundary is not an observable quantity.
Then, condition \eqref{II_law} can be satisfied by letting
\begin{equation}
\rho \sigma=\rho\zeta-\nabla \cdot \bk,\qquad  \int_{\Omega_t} \rho\, \zeta(\bx,t) \,dv\ge0.
\label{zeta_k}\end{equation}
We stress that equation \eqref{zeta_k}$_1$  could be  interpreted as a microscopic local balance at any point $\bx\in\Omega_t$, where the entropy production density $\sigma$  is due to a {\it local entropy supply} $\zeta$ and an internal entropy flux $\bk$ that is exchanged on a microscopic level with the surrounding points. The idea to introduce the extra entropy flux traces back to and is discussed in \cite{FGM_PhysicaD}.

Hereafter, the validity  of the Second Law is ensured by the following stronger formulation.

\smallskip\noindent
{\bf Second Law of Thermodynamics (nonlocal strong form).} {\it During all compatible processes the entropy extra flow at any point of the body boundary is zero and the local entropy supply at any point of the body interior is nonnegative, namely}
\begin{equation}
\label{strong_II_law} 
\bk\cdot\bn\big\vert_{\partial\Omega_t}=0,\quad \zeta\ge0 \quad \hbox{in }\Omega_t.
\end{equation}

\smallskip

Consequently, after replacing \eqref{zeta_k}$_1$ into \eqref{loc_eta_ineq}  it follows
\begin{equation}
\label{eta_ineq_extra}
\rho \dot{\eta}  + \nabla\cdot\left( \frac{\bq}{\theta} +\bk \right) - \frac{\rho r}{\theta}=\rho\zeta\ge0,
\end{equation}
that resembles the classical entropy inequality except that the  flux vector is redefined by adding the extra contribution $\bk$. This suggests that $\bk$ is a characteristic nonlocal contribute and it must disappear  when only local processes are involved.

  Henceforth, we assume that $\bk$ and $\zeta$ are given as constitutive functions of the assigned set of variables, as well as the internal energy and the entropy. The idea that the entropy production $\sigma$ be given by a constitutive equation traces back to Green and Naghdi 
\cite{GN}, however they do not require that $\sigma\ge0$ along whatever process. This idea was extended in \cite{CMT,MorroGiorgi_JELAS,MorroGiorgi_MDPI,GM_Materials} assuming from the outset that all constitutive functions depend on the common set of physical variables and imposing $\sigma\ge0$. Conceptually our contribution follows the same scheme (see also \cite{GM_book}).

Upon substitution of $\nabla\cdot \bq - \rho r$ from the energy equation (\ref{eq:en}) into \eqref{eta_ineq_extra} and multiplication by $\theta$ we obtain the basic thermodynamic relation
\beq - \rho(\dot{\psi} + \eta \dot{\theta})  + \bT \cdot \bL 
- \frac 1 \theta \bq \cdot \nabla\theta + \theta \nabla\cdot \bk =\rho\theta\zeta,\label{eq:CD0} \eeq
where $\psi= \varepsilon - \theta \eta$ denotes the  {\it Helmholtz free energy density}. 
Within the isothermal setting, the term $\rho\theta\zeta$ is usually referred to as {\it rate of (mechanical) dissipation} \cite{RS1999}. 
Due to \eqref{strong_II_law}, equation \eqref{eq:CD0} becomes an inequality that must be satisfied along whatever process.

Finally, multiplying \eqref{eq:CD0} by $J$ and using the identities \eqref{nabla_rel}, \eqref{vel_grad} and $\rho_\sR=J\rho$,
 we obtain the the basic thermodynamic inequality in the material description
 \beq\begin{split}
-{\rho_\sR}\big(\dot{\psi} + \eta \dot{\theta}
\big)
+  \bT_{\sR\sR} \cdot \dot{\bE} 
- \frac 1 {\theta} \bq_\sR \cdot \nablaR \theta
+ \theta\nablaR \cdot \bk_\sR =\rho_{\sR}\theta\zeta\ge0,
\end{split}\label{basic_ent_ineq}\eeq
where $\nablaR:=\bF^T\nabla$ and
\[ \bT_{\sR\sR} := J \bF^{-1} \bT \bF^{-T}, \qquad\bq_\sR := J \bF^{-1} \bq,\qquad\bk_\sR := J \bF^{-1} \bk, \quad
\bk_\sR\cdot\bn_\sR\big\vert_{\partial\cR}=0.
\]


\section{Local theories of heat conduction}

In the first part of this paper we restrict our attention to {\it local} theories, that is we assume $\bk=0$ and then $\sigma=\zeta\ge0$. 
Consequently \eqref{basic_ent_ineq} reduces to
\beq-{\rho_\sR}(\dot{\psi} + \eta \dot{\theta}) 
+ \bT_{\sR\sR} \cdot \dot{\bE} 
- \frac 1 {\theta}\, \bq_\sR \cdot \nablaR \theta=\rho_\sR\theta\sigma  \ge 0.\label{eq:CD} \eeq
In recent papers  \cite{MorroGiorgi_JELAS,MorroGiorgi_MDPI,GM_Materials} several local models of viscoelastic, viscoplastic and elastic-plastic materials (solids and fluids) are developed on the basis of \eqref{eq:CD} by specifying three elements;
\begin{itemize}
\item[-] the set $\Xi_\sR$ of admissible variables,
\item[-] the free energy density function $\psi=\psi(\Xi_\sR)$,
\item[-] the entropy production function $\sigma=\sigma(\Xi_\sR)$.
\end{itemize}
Those papers mainly scrutinized the mechanical characteristics of materials deriving quite general stress-strain constitutive equations from the Clausius-Duhem inequality.

Likewise, thermodynamically consistent constitutive equations of the rate type are devised here for heat-conduction in solids. The key idea is to exploit  the formal similarity of the scalar products
$$\bT_{\sR\sR} \cdot \dot{\bE},\qquad -\bq_\sR \cdot\nablaR(\ln \theta)$$
that represent the mechanical and thermal powers of internal forces, respectively. In \cite{MorroGiorgi_MDPI, GM_Materials} the exploitation of the thermodynamic inequality involving the mechanical power leads to some well-known viscoelastic models. By mimicking such a procedure, the exchange of $ \bT_{\sR\sR}$ with $\bq_\sR$, as well as that of $\dot\bE$ with $-\nablaR(\ln \theta)$, give rise to as many models of heat conduction.  

\subsection{Basic local models}

The simplest local models can be obtained by choosing a small set of independent variables, namely the temperature field and its gradient along with a strain measure and its rate. Since invariance requirements demand that the dependence on time and space derivatives occurs in an objective way, we let 
\[ \Xi_\sR := (\theta, \bE, \nablaR \theta,\dot \bE)\]
be the basic set of  invariant variables. Accordingly, Euclidean invariance of $\psi, \eta$ and $\sigma$ implies that their dependence  be a function of  $\Xi_\sR$.
Here and below, we assume $\eta$ is continuous while $\psi$ is continuously differentiable.

Upon evaluation of $\dot{\psi}$ and substitution in (\ref{eq:CD}) we obtain
\[  \rho_{\sR}(\partial_\theta \psi + \eta) \dot{\theta}
+ (\rho_{\sR} \partial_\bE \psi - \bT_{\sR\sR})\cdot \dot{\bE}  + \rho_{\sR} \partial_{\nablaR \theta} \psi \cdot \nablaR \dot{\theta}+ \rho_{\sR} \partial_{\dot\bE} \psi \cdot \ddot{\bE} 
 +\frac 1 \theta \bq_{\sR} \cdot \nablaR \theta=-\rho_\sR\theta\sigma.\] 
Due to the constitutive assumption on $\psi, \eta$ and $\sigma$, this expression depends linearly on $\dot{\theta}$, $\ddot{\bE} $ and $\nablaR \dot{\theta}$. Hence, their arbitrariness implies that
\[ \psi = \psi(\theta, \bE), \qquad \eta = - \partial_\theta \psi,\qquad   (\rho_{\sR} \partial_\bE \psi - \bT_{\sR\sR})\cdot \dot{\bE}  +  \frac1\theta \bq_{\sR} \cdot \nablaR \theta =-\rho_\sR\theta\sigma\le 0\]

If $\sigma$ is independent of $\dot\bE$ then the linearity and arbitrariness of $\dot{\bE} $ give
$$ \bT_{\sR\sR}=\rho_{\sR} \partial_\bE \psi, \qquad     \bq_{\sR} \cdot \nablaR \theta =-\rho_\sR\theta^2\sigma\le 0.$$ 
As is well known, this occurs in heat conducting hyperelastic materials. 

Otherwise, we split the last inequality as follows\footnote{Assuming that $\bT_{\sR\sR}$ is independent of $\nablaR\theta$ and $ \bq_\sR$ is independent of $\dot{\bE}$ , as usually happens, then $\sigma^{\sE\sT}=\sigma\vert_{\nablaR\theta=\bzero}$ and $\sigma^q=\sigma\vert_{\dot\bE=\bzero}$.} \cite{GM_Materials}
 \beq (\bT_{\sR\sR} - \rho_\sR \partial_\bE \psi)\cdot \dot{\bE}= \rho_{\sR}\theta \sigma^{\sE\sT}\ge0, \quad
\bq_\sR \cdot \nablaR \theta = -\rho_{\sR} \theta^2 \sigma^q\le0. \label{eq:TqR} \eeq
For example, it is easy to check that the former inequality  \eqref{eq:TqR}$_1$ is satisfied  by the Kelvin-Voigt viscoelastic model, 
$$\bT_{\sR\sR} = \bsC\bE + \bsK \dot{\bE},$$
and thermodynamic consistency is ensured by letting
\beq \rho_\sR\psi=\frac12\bE\cdot \bsC\bE,\qquad 
\rho_\sR\theta\sigma^{\sE\sT} =\dot{\bE}\cdot\bsK \dot{\bE},\label{eq:sigma_F0}\eeq
where $\bsK$ denotes a positive semidefinite fourth-order tensor.
On the other hand, the latter inequality  \eqref{eq:TqR}$_2$
 is satisfied by assuming Fourier's law for the heat flux vector, 
 $$\bq_\sR = - \bkappa \nablaR \theta.$$ In this case $\psi$ is not involved and thermodynamic consistency requires
\beq \rho_{\sR}\sigma^q = \nablaR(\ln \theta) \cdot \bkappa \nablaR(\ln \theta),\label{eq:sigma_F}\eeq
where $\bkappa$ denotes a positive semidefinite second-order tensor. 

A comparison of the expressions of $\sigma^{\sE\sT}$ and $\sigma^q$ as given by \eqref{eq:sigma_F0} and \eqref{eq:sigma_F} respectively, provides immediate evidence of our key idea.

In order to derive more general constitutive equations for $\bT_{\sR\sR} $ and $\bq_\sR$ we take advantage of the following property (see, {\it e.g.},  \cite[Proposition 1]{GM_Materials}). 

\medskip\noindent{\bf Representation Lemma.}
\it Given a vector (or a tensor)  $\bA$, let $\bN = \bA/|\bA|$. If $\bZ$ is a vector (or a tensor) such that only its projection on $\bN$ is known, namely
$$\bZ \cdot \bN=g,$$  
then for any vector (or tensor) $\bG$ we can write
\beq \bZ = g\bN + (\bone - \bN\otimes \bN)\bG 
\label{eq:BNG} \eeq
where  $\bone$ is the second-order (or fourth-order) identity tensor.

\rm\smallskip
Let $\bsM$ be a symmetric non-singular fourth-order tensor. After applying the Representation Lemma to \eqref{eq:TqR}$_1$ with
 $$\bN = \bsM \dot{\bE}/|\bsM \dot{\bE}|,\qquad
\bZ = \bsM^{-1}[\bT_{\sR\sR} - \rho_\sR \partial_\bE \psi],\qquad \bG =\bzero,$$
  we obtain the nonlinear constitutive equation
\[ \bT_{\sR\sR} = \rho_\sR \partial_\bE \psi +  \rho_\sR\theta \sigma^{\sE\sT} \frac{\bsM^2 \dot{\bE}}{|\bsM \dot{\bE}|^2}
 .\]
In particular, the linear Kelvin-Voigt relation follows from \eqref{eq:sigma_F0} with  the  special choice $\bsK=\bsM^2$.

Let $\bA$ be a any symmetric, non-singular, second-order tensor, possibly parametrized by the temperature  $\theta$. Applying (\ref{eq:BNG}) with $\bN = \bA \nablaR \theta/|\bA \nablaR \theta|$, $\bZ = \bA^{-1}\bq_\sR$, and $\bG = \bzero$,  from  \eqref{eq:TqR}$_2$ it follows that 
\[\bq_\sR = -  \frac{\rho_\sR\theta^2 \sigma^q}{|\bA \nablaR \theta|^2}\bA^2 \nablaR \theta. \]
In particular, assuming $\bkappa = \bA^2$ and $\sigma^q$ as in \eqref{eq:sigma_F}, we recover Fourier's law.  Note however that in this case $\bkappa$ turns out to be strictly positive definite because $\bA$ is non-singular.

\section{Rate-type local  models}

To describe rate-type models we expand the basic set of Euclidean invariant variables by adding some quantities that are usually not assumed to be independent, but considered as constitutive functions; specifically, $\bT_{\sR\sR}$ and $\bq_\sR$.  Hence we let
\[ \Xi_\sR := (\theta, \bE, \bT_{\sR\sR}, \bq_\sR, \nablaR \theta, \dot{\bE})\]
be the set of independent variables and assume that $\psi$, $\eta$, $\sigma$ are scalar-valued functions of $\Xi_\sR$.
 In view of the introduction of rate-type models,  we look for a scheme where $\dot \bq_\sR$ and $\nablaR \theta$, as well as $ \dot\bT_{\sR\sR}$ and $\dot{\bE}$, are regarded as mutually dependent variables\footnote{A similar approach occurs in anholonomic system described by a set of independent parameters subject to differential constraints that make their rates mutually dependent.}.

Upon evaluation of $\dot{\psi}$ and substitution in (\ref{eq:CD}) we obtain
\[ \begin{split} \rho_{\sR}(\partial_\theta \psi + \eta) \dot{\theta}
+ (\rho_{\sR} \partial_\bE \psi - \bT_{\sR\sR})\cdot \dot{\bE} + \rho_{\sR} \partial_{\bT_{\sR\sR}} \psi \cdot \dot{\bT}_{\sR\sR} + \rho_{\sR} \partial_{ \bq_\sR}\psi \cdot { \dot\bq_\sR}& \\
+ \rho_{\sR} \partial_{\nablaR \theta} \psi \cdot \nablaR \dot{\theta}  
+ \rho_{\sR} \partial_{\dot{\bE}}\psi \cdot \ddot{\bE}
 +\frac 1 \theta \bq_{\sR} \cdot \nablaR \theta=&-\rho_\sR\theta\sigma.\end{split} \] 
The linearity and arbitrariness of $\dot{\theta}$, $\ddot{\bE}$, 
$\nablaR \dot{\theta}$, imply that 
\beq \psi = \psi(\theta, \bE, \bT_{\sR\sR},\bq_\sR), \qquad  \eta=-\partial_\theta \psi. \label{eq:psi_first_ord}\eeq
and  the thermodynamic inequality reduces to 
\beq  (\rho_{\sR} \partial_\bE \psi - \bT_{\sR\sR})\cdot \dot{\bE} + \rho_{\sR} \partial_{\bT_{\sR\sR}} \psi \cdot \dot{\bT}_{\sR\sR} + \rho_{\sR}\partial_{ \bq_\sR}\psi \cdot { \dot\bq_\sR}
+ \frac {\bq_{\sR}} \theta \cdot \nablaR \theta=-\rho_\sR\theta\sigma \le 0. \label{eq:ine0}\eeq 
We can say that the arguments of $\psi$ in  \eqref{eq:psi_first_ord} provide the set of {\it state variables},
$$\Sigma_\sR=(\theta, \bE, \bT_{\sR\sR},\bq_\sR).$$

Now, let $\dot{\bE}$ and $\dot{\bT}_{\sR\sR}$ be independent of $\dot{\bq}_\sR, \nablaR \theta$. Since $\psi$ is also independent of $\dot{\bq}_\sR, \nablaR \theta$,  then letting $\dot\bq_\sR=\nablaR \theta=\bzero$ we can  write \eqref{eq:ine0} in the form
\beq  (\rho_{\sR} \partial_\bE \psi - \bT_{\sR\sR})\cdot \dot{\bE} + \rho_{\sR} \partial_{\bT_{\sR\sR}} \psi \cdot \dot{\bT}_{\sR\sR}=-\rho_\sR\theta\sigma^{\sE\sT} \le 0, 
 \label{eq:ineETq}\eeq
where $\sigma^{\sE\sT}$ is the entropy production density $\sigma$ when $\dot\bq_\sR=\nablaR\theta = \bzero$. 
Likewise,  assuming that $\dot{\bq}_\sR, \bq_\sR$, and $\nablaR \theta$ are independent of $\dot{\bE}, \dot{\bT}_{\sR\sR}$, as usually happens, then 
\beq 
\rho_{\sR}\partial_{ \bq_\sR}\psi \cdot { \dot\bq_\sR}
+ \frac {\bq_{\sR}} \theta \cdot \nablaR \theta=-\rho_\sR\theta\sigma^{q}\le 0,
\label{eq:ineETqq}\eeq 
where $\sigma^{q}$ is the entropy production density  when $\dot{\bT}_{\sR\sR}=\dot{\bE} = \bzero$. 
Apparently, $\sigma = \sigma^{\sE\sT} + \sigma^{q}$.  
The entropy productions $\sigma^{\sE\sT}$ and $\sigma^{q}$, as well as $\sigma$, are nonnegative constitutive functions of $\Xi_\sR$ to be determined according to the constitutive model.

 By exploiting inequality (\ref{eq:ineETq}) and disregarding heat conduction, memory properties of viscoelasticity, elastoplasticity and viscoplasticity were modeled with suitable nonlinear rate-type stress-strain relations  \cite{MorroGiorgi_JELAS,MorroGiorgi_MDPI}. 
However, since the aim of this paper is to establish nonlinear rate-type models of heat conduction, we limit our attention to (\ref{eq:ineETqq}) and disregard  (\ref{eq:ineETq}).
In detail, the dependence of constitutive functions  on ${\bE}, {\bT}_{\sR\sR}$ is neglected so that \eqref{eq:psi_first_ord} reduces to $\psi=\psi(\theta,\bq_\sR)$.

Although inequality  (\ref{eq:ineETqq}) is common to many approaches where both $\bq_\sR$ and $\nablaR \theta$ are independent variables  \cite{MCM, MorroJE} our scheme has the advantage that the material time derivative is objective \cite{MorroGiorgi} and makes consistency with thermodynamics much easier than it happens when heat conduction involves histories \cite{Meccanica}, summed histories \cite{GP} or internal variables \cite{Van2015,Maugin}.

Since (\ref{eq:ineETqq}) holds for any pair of functions $\psi$ and $\sigma^q\ge0$
of the variables $\Xi_{\sR}$, below we present some examples by choosing special expressions for these functions. Many of them match with well-known  thermal conduction models.

\subsection{Green-Naghdi type II  heat conductors}
A (linear) type II heat conductor according to Green and Naghdi \cite{GN} is characterized by
\beq
\bq_\sR=-\bxi\nablaR \alpha,
\label{GNII_original}\eeq
where $\alpha$ is an internal variable, named thermal displacement, that represents, by definition, a time primitive of the temperature, namely $\dot\alpha=\theta$. Since this relation enters the heat equation for the temperature $\theta$ only after a differentiation with respect to time, namely 
\[\dot\bq_\sR=-\bxi\nablaR \theta,\]
it is reasonable to consider this GN II constitutive equation  instead of \eqref{GNII_original}. Such a rate-type model can be obtained from inequality \eqref{eq:ineETqq} by appealing to the Representation Lemma. 

Assuming that $\sigma^{q}=0$ eqn.\,\eqref{eq:ineETqq} becomes
\beq  \rho_{\sR}\partial_{ \bq_\sR}\psi \cdot { \dot\bq_\sR} =- \frac{1}{\theta} \bq_{\sR}\cdot \nablaR \theta.
\label{Nondissipative_cond} \eeq
This condition characterizes heat conductors without entropy intrinsic production, that means without any dissipation.
An explicit relation for $\dot{\bq}_\sR$ then follows by  applying \eqref{eq:BNG} with 
\[\bN=\partial_{\bq_\sR} \psi/|\partial_{\bq_\sR} \psi|, \qquad\bZ={{\dot\bq}_\sR}, \qquad\bG=\bJ_\sR\nablaR \theta,\]
 where $\bJ_\sR = \bJ_\sR(\theta,\bq_\sR)$  is an arbitrary second-order tensor-valued function. Using \eqref{Nondissipative_cond} and assuming $ \partial_{ \bq_\sR}\psi\neq{\bzero}$ we have
\[ \bZ \cdot \bN = -\frac{\bq_\sR\cdot\nablaR \theta}{\rho_\sR\theta|\partial_{\bq_\sR} \psi|} \]
and then
\beq 
{{\dot\bq}_\sR}=-\left[\frac{\bq_\sR\cdot\nablaR \theta}{\rho_\sR\theta|\partial_{\bq_\sR} \psi|^2} \right] \partial_{\bq_\sR} \psi+(\bone-\bN\otimes\bN)\bJ_\sR\nablaR \theta.
\label{Hypoelastic_mat2} \eeq
Finally, letting
\beq\begin{split}
\bK_{_{R}}(\theta, \bq_\sR)&=\frac{1 }{ \rho_{_R}\theta|\partial_{\bq_\sR} \psi|} \bN\otimes\bq_\sR-(\bone-\bN\otimes\bN)\bJ_\sR\\
&=-\bJ_\sR+\frac{1 }{ \rho_{\sR}|\partial_{\bq_\sR} \psi|^2}\partial_{\bq_\sR} \psi\otimes(\bq_\sR/\theta+\rho_{_R}{\bJ}_\sR^T\partial_{\bq_\sR} \psi),
\end{split}\label{K_R0}\eeq
we can write \eqref{Hypoelastic_mat2} in the more compact form,
\beq
{{\dot\bq}_\sR}=-\bK_{_{R}}\nablaR \theta.
\label{Hypo_mat_2}\eeq
Owing to the arbitrariness of  $\bJ_\sR$ there are infinitely many  tensors $\bK_{_{R}}$   compatible with a given free energy $\psi$. Moreover, $\bK_{_{R}}$  need not  be positive-definite or symmetrical and this is a notable difference from the conductivity tensor in Fourier-like models. 

Thanks to the  aforementioned  correspondence between $ \bT_{\sR\sR},\dot\bE$ and $\bq_\sR,\nablaR (\ln\theta)$,  respectively, an heat conduction theory of this type matches hypoelasticity. Therefore, many achievements from that field can be borrowed here (see \cite{MorroGiorgi_JELAS}).
As an example, let $\psi(\theta,\bq_\sR)$ depend on $\bq_\sR$ via $\xi = |\bq_\sR|^n$, $n \ge 2$. Hence
\[ \partial_{\bq_\sR}\psi = n\, \partial_\xi \psi \,|\bq_\sR|^{n-2}\, \bq_\sR, \quad \partial_\xi \psi\neq0, \]
and  (\ref{K_R0}) becomes
\[\begin{split}
\bK_{_{R}}=-\bJ_\sR+\frac{1 }{ \rho_{\sR} \partial_\xi \psi\, n|\bq_\sR|^{n}}  \, \bq_\sR\otimes\Big(\frac{\bq_\sR}\theta+n\,\rho_{\sR} \partial_\xi \psi \,|\bq_\sR|^{n-2}\, {\bJ}_\sR^T\bq_\sR\Big).
\end{split}\]
Finally, choosing 
$${\bJ}_\sR=-\frac1{n\, \theta\rho_{\sR}\partial_\xi \psi \,|\bq_\sR|^{n-2}}\bone$$  
 we obtain
 $$\bK_{_{R}}=-\bJ_\sR=\frac1{n\, \theta\rho_{\sR}\partial_\xi \psi \,|\bq_\sR|^{n-2}}\bone$$  
and then
$$
{{\dot\bq}_\sR}=-\frac{\nablaR \theta}{n\, \theta\rho_{\sR}\partial_\xi \psi \,|\bq_\sR|^{n-2}}.
$$

On the contrary, if the constitutive relation is given in advance, for instance
\beq
 {{\dot\bq}_\sR}=\hat\bK_{_{R}}\nablaR \theta,
\label{Hypo_mat_hat}\eeq
$\hat\bK_{_{R}}$ being a fixed second-order tensor, one can  ask  whether this model is thermodynamically consistent.
Upon substitution into  \eqref{Nondissipative_cond} we obtain
\[\begin{split}
\rho_{\sR}\partial_{ \bq_\sR}\psi \cdot \hat\bK_{_{R}}\nablaR \theta &=- \frac{\bq_{\sR}}\theta\cdot \nablaR \theta
\end{split}\]
and the arbitrariness of  $\nablaR \theta$ implies
\beq
{\bq_{\sR}}=-\rho_{\sR} \theta\hat\bK^T_{_{R}}\partial_{ \bq_\sR}\psi .
\label{Hypoel_cond0}\eeq
If we replace \eqref{Hypoel_cond0} into \eqref{K_R0} we obtain
\[\begin{split}
\bK_\sR&=\bJ_\sR+\frac{1 }{|\partial_{\bq_\sR} \psi|^2}\partial_{\bq_\sR} \psi\otimes[(\hat\bK^T_{_{R}}-\bJ_\sR^T)\partial_{\bq_\sR}\psi].
\end{split}\]
Then, by  letting  $\bJ_\sR=\hat\bK_{_{R}}$, the identity  $\bK_{_{R}}=\hat\bK_{_{R}}$ follows. We infer that 
\eqref{Hypo_mat_hat} is consistent with the Second Law if and only if there exists a free energy $\psi$ satisfying \eqref{Hypoel_cond0}. This is the case provided that $\hat\bK_{_{R}}$ is a non-singular symmetric constant  tensor and the free energy is given by
\[
\rho_{\sR}\psi=\rho_{\sR}\psi_0(\theta)+\frac1{2\theta}\bq_\sR\cdot\hat\bK^{-1}_\sR\bq_\sR.
\]
 Assuming, for instance,
 $$\hat\bK_\sR=f(\theta, |\bq_\sR|)\bone,$$ 
its thermodynamic consistency is ensured by letting 
 $$\rho_{\sR}\psi=\rho_{\sR}\psi_0(\theta)+\frac1\theta g(\theta,\xi),  \quad\xi = |\bq_\sR|^n, \ n \ge 2,$$
  provided that
$$\partial_\xi g(\theta,\xi) =\frac1{f(\theta,|\bq_\sR|)\,|\bq_\sR|^{n-2}}.$$

\subsection{Maxwell-Cattaneo-Vernotte-like models}
Assume $ \sigma^q > 0$ and $ \partial_{\bq_{\sR}} \psi \neq \bzero$. Equation  \eqref{eq:ineETqq} can then be written in the form
 \[
\frac{\partial_{ \bq_\sR}\psi}{ |\partial_{\bq_\sR} \psi|} \cdot { \dot\bq_\sR}
=- \frac{1 }{ \rho_{_R}|\partial_{\bq_\sR} \psi|}\Big(\frac {\bq_{\sR}} \theta \cdot \nablaR \theta+\rho_\sR\theta\sigma^{q}\Big).
\]
Eexploiting this relation and applying \eqref{eq:BNG} with $\bN=\partial_{\bq_\sR} \psi/|\partial_{\bq_\sR} \psi|$, $\bZ={{\dot\bq}_\sR}$ 
we obtain
\beq
{{\dot\bq}_\sR}=-\Big(\frac {\bq_{\sR}} \theta \cdot \nablaR \theta+\rho_\sR\theta\sigma^{q}\Big) \frac{\bN }{ \rho_{_R}|\partial_{\bq_\sR} \psi|} + (\bone - \bN\otimes \bN)\bG ,
\label{Hypo_mat_CM}\eeq
where $\bG$ is an arbitrary vector-valued function dependent on $(\theta,\bq_\sR,\nablaR \theta)$.

We now show that a class of nonlinear isotropic models of the Maxwell-Cattaneo-Vernotte  (MCV) type \cite{Straughan},
can be derived  from \eqref{Hypo_mat_CM} by properly choosing $\psi$ and $\sigma^q$  (see, for instance, \cite{MorroGiorgi_MDPI}).
Let $\psi(\theta,\bq_\sR)$ depend on $\bq_\sR$ via $\xi = |\bq_\sR|^n$, $n \ge 2$. Hence
\[ \partial_{\bq_\sR}\psi = n\, \partial_\xi \psi \,|\bq_\sR|^{n-2}\, \bq_\sR,  \]
and inequality (\ref{eq:ineETqq}) becomes
\[  \left(n \rho_\sR \partial_\xi \psi |\bq_\sR|^{n-2} \dot{\bq}_\sR + \frac 1 \theta \nablaR \theta\right)\cdot \bq_\sR=-\rho_\sR\theta\sigma^q  \le 0.\]
Let 
\[\sigma^q =\frac{|\bq_\sR|^2}{\rho_\sR\theta^2\kappa},\]
 where $\kappa$ is a positive-valued scalar function. In view of \eqref{Hypo_mat_CM},  if $\partial_\xi \psi\neq0$ it follows
\[ n \rho_\sR \partial_\xi \psi |\bq_\sR|^{n-2} \dot{\bq}_\sR 
= - \frac 1 \theta \nablaR \theta - \frac{1}{\kappa\theta} \bq_\sR .\]
Consequently,
\[\tau \dot{\bq}_\sR 
+ \bq_\sR = - \kappa \nablaR \theta, \qquad \tau =  \kappa n \rho_\sR \theta \partial_\xi \psi |\bq_\sR|^{n-2}, \]
can be viewed as a MCV equation with $\tau$
playing the role of relaxation time and $\kappa$ representing the heat conductivity. If $n=2$ then
\[ \tau = 2\kappa \rho_\sR \theta \partial_\xi \psi . \]
In this case, $\tau$ reduces to a function of the temperature alone provided that $\kappa=\kappa(\theta)$ and $\psi$ is a linear function of $\xi=|\bq_\sR|^2$.

 In addition, a more general class of  anisotropic models of the MCV type can be derived from \eqref{Hypo_mat_CM}. To this end let
\[
\rho_{\sR}\psi=\rho_{\sR}\psi_0(\theta)+\frac\tau{2\theta}\bq_\sR\cdot\bkappa^{-1}\bq_\sR, \qquad \sigma^q =\frac1{\rho_\sR\theta^2}\bq_\sR\cdot\bkappa^{-1}\bq_\sR.
\]
where $\bkappa$ must be a positive-definite second-order tensor in order to have $\sigma^q\ge0$.  In view of  \eqref{Hypo_mat_CM} it follows
\[
{{\dot\bq}_\sR}=-\frac1\tau(\bkappa\bN \cdot \nablaR \theta+\bq_\sR\cdot\bN) {\bN } + (\bone - \bN\otimes \bN)\bG= \bG-\frac1\tau\bN\otimes \bN(\bkappa\nablaR \theta+\bq_\sR+\tau\bG),
\]
where $\bN=\bkappa^{-1}\bq_\sR/|\bkappa^{-1}\bq_\sR|$, and letting $\tau\bG=-\bkappa\nablaR \theta-\bq_\sR$ we obtain 
\beq
\tau{\dot\bq}_\sR+{\bq}_\sR=-\bkappa\nablaR \theta.
\label{eq:MCV}\eeq
The sign of $\tau$ is not prescribed by thermodynamic arguments. However, the common assumption $\tau>0$ implies that $\psi$ has a minimum at ${\bq}_\sR = \bzero$. In this case $\tau$ is called {\it relaxation time} in that we recover the Fourier law when $\tau\to 0$.

\section{Higher-order rate-type local models}
As we are interested in heat conduction only, hereafter we neglect all variables involving stress and strain. 
Accordingly, to describe local effects of higher order in time we expand the previously considered set $(\theta, \bq_\sR, \nablaR \theta)$ by adding first-order time derivatives of these variables.
Hence we let 
\[ \Xi_\sR := (\theta, \dot\theta, \bq_\sR, \dot{\bq}_\sR, \nablaR \theta, \nablaR \dot\theta)\]
be the set of independent and invariant variables. Moreover, let $\psi, \eta, \sigma$ be dependent on $\Xi_\sR$\footnote{Hereafter we neglect the superscript $^q$ in $\sigma^q$.}. Upon evaluation of $\dot{\psi}$ and substitution in (\ref{eq:CD}) we obtain
\[ \begin{split} \rho_{\sR}(\partial_\theta \psi + \eta) \dot{\theta}
+ \rho_{\sR} \partial_{\dot\theta} \psi \cdot \ddot{\theta}  
+ \rho_{\sR}\partial_{ \bq_\sR}\psi \cdot { \dot\bq_\sR}+\rho_{\sR}\partial_{ \dot\bq_\sR}\psi \cdot { \ddot\bq_\sR}& \\
  + \rho_{\sR} \partial_{\nablaR \theta} \psi \cdot \nablaR \dot{\theta}+ \rho_{\sR} \partial_{\nablaR \dot\theta} \psi \cdot \nablaR \ddot{\theta}+\frac 1 \theta \bq_{\sR} \cdot \nablaR \theta&=-\rho_\sR\theta\sigma.\end{split} \] 
The linearity and arbitrariness of $\ddot{\theta}$ and $\nablaR \ddot{\theta}$ imply that $\psi$ is independent of $\dot{\theta}$ and  $\nablaR \dot\theta$,
Assuming, for simplicity, that $\sigma$ is also independent of $\dot\theta$, the linearity and arbitrariness of  $ \dot{\theta}$  imply 
\[ \psi = \psi(\theta,\bq_\sR, \dot\bq_\sR, \nablaR \theta), \qquad \eta=-\partial_\theta \psi , \]
so that $\Sigma_\sR=(\theta,\bq_\sR, \dot\bq_\sR, \nablaR \theta)$ and the entropy inequality reduces to 
\beq \rho_{\sR}\partial_{ \bq_\sR}\psi \cdot { \dot\bq_\sR}+\rho_{\sR}\partial_{ \dot\bq_\sR}\psi \cdot { \ddot\bq_\sR}  + \rho_{\sR} \partial_{\nablaR \theta} \psi \cdot \nablaR \dot{\theta}
+ \frac {\bq_{\sR}} \theta \cdot \nablaR \theta=-\rho_\sR\theta\sigma \le 0. \label{eq:ine1}\eeq 
In the light of the rate-type models considered below,  $\ddot \bq_\sR$ must be regarded as dependent on  $\dot \bq_\sR$,  $\nablaR \theta$ and $\nablaR \dot\theta$.
A similar procedure has been adopted in \cite{MorroGiorgi_MDPI} to evaluate the thermodynamic consistency of some non-linear viscoelastic models of the rate-type, such as the Oldroyd-B and Burgers' fluids.

\subsection{Heat conductors of the Jeffreys type}
The constitutive equation of a heat conductor of the Jeffreys type is given by\footnote{\,$\Sym^+$ denotes the set of  symmetric and positive-definite  tensors of the   second order.}
\beq
\tau\dot \bq_\sR+\bq_\sR=-\bxi\nablaR \theta- \tau\bkappa\nablaR \dot\theta,\qquad  \bxi,\bkappa\in\Sym^+.
\label{eq:Jeffreys}\eeq
where $\tau>0$ is referred to as {\it relaxation time}. This model can also be written as a system
\[
\bq_\sR=-\bkappa\nablaR \theta+\by_\sR, \qquad \tau\dot\by_\sR+ \by_\sR=(\bkappa-\bxi)\nablaR \theta.
\]
 In this way we can interpret the Jeffreys model as the constitutive law of the heat flux vector $\bq_\sR$ in a mixture of two different conductors, $\bq_\sR=\bq_\sR^{(1)}+\bq_\sR^{(2)}$, one of the Fourier type, $\bq_\sR^{(1)}=-\bkappa^{(1)}\nablaR \theta$, $\bkappa^{(1)}=\bkappa$, the other of the MCV type, $ \bq_\sR^{(2)}=\by_\sR$, $\bkappa^{(2)}=\bxi-\bkappa$. A similar result has been obtained in \cite{CR} within the framework of classical irreversible thermodynamics. 

In the limits as $\tau\to 0^+$ and $\tau\to+\infty$ we have, respectively,
\[
\bq_\sR\simeq-\bxi\nablaR \theta, \qquad \dot\bq_\sR\simeq-\bkappa\nablaR \dot\theta.
\]
 Furthermore,  when $\bkappa=\bzero$  the anisotropic MCV model \eqref{eq:MCV} is recovered. 

The rate equation \eqref{eq:Jeffreys} can be derived from  \eqref{eq:ine1} by paralleling the procedure devised in \cite[\S\,8.2]{MorroGiorgi_MDPI} for the Oldroyd-B fluid.
After neglecting the dependence of $\psi$ on $ \dot \bq_\sR$, \eqref{eq:ine1} becomes
\beq
   \rho_{\sR}\partial_{ \bq_\sR}\psi \cdot { \dot\bq_\sR} + \rho_{\sR} \partial_{\nablaR \theta} \psi \cdot \nablaR \dot{\theta}
+ \frac {\bq_{\sR}} \theta \cdot \nablaR \theta=-\rho_\sR\theta\sigma .
\label{eq:heat_rate}\eeq
This equality holds for any pair of functions $\psi$ and $\sigma$, where the former depends on the variables $\bq_\sR,\theta,\nablaR\theta$, whereas the latter depends on the whole set $\Xi_\sR$ and is subject to the thermodynamic condition $\sigma \ge 0$. 

The selection $\bZ=\dot{\bq}_{\sR}$ and $\bN = \partial_{\bq_{\sR}}\psi/|\partial_{\bq_{\sR}}\psi|$ into  the representation formula (\ref{eq:BNG}) together with the exploitation of (\ref{eq:heat_rate}) provide 
\beq \dot{\bq}_{\sR} = -\Big[\frac 1 {\theta}\, \bq_\sR \cdot \nablaR \theta+\rho_\sR \partial_{\nablaR{\theta}} \,\psi\cdot\nablaR \dot{\theta}+ \rho_\sR\theta \sigma\Big]\frac{\partial_{\bq_{\sR}}\psi}{\rho_\sR |\partial_{\bq_{\sR}}\psi|^2}  
+ (\bone- \bN\otimes \bN)\bG,
\label{eq:OG}\eeq
where $\bG$ is an arbitrary second-order tensor-valued function of $\Xi_\sR$. 

Let $\Sym^*$ denotes the set of symmetric and non singular second-order tensors and let
\[ \rho_\sR\psi = \rho_\sR\psi_0(\theta) + 
\frac{\tau}{2\theta } [\bq_{\sR}+ \bkappa\nablaR \theta]\cdot(\bxi+ \bkappa)^{-1} [\bq_{\sR}+ \bkappa\nablaR \theta],\]
with $\bxi+\bkappa\in\Sym^* $. After denoting
\[
\bQ_\sR:= (\bxi+ \bkappa)^{-1}\big(\bq_\sR +\bkappa\nablaR \theta\big),
\]
 we obtain
\[ \rho_\sR \partial_{\bq_\sR} \psi=\frac{\tau}{\theta }\bQ_\sR
 , \qquad \rho_\sR \partial_{\nablaR{\theta}} \,\psi=\frac{\tau}{\theta }\bkappa\bQ_\sR ,\qquad \bN =\frac{\bQ_\sR}{|\bQ_\sR|}.\]
Now we define the entropy production as
\[\rho_\sR \sigma =\frac1{\theta^2}\bq_\sR\cdot(\bxi+ \bkappa)^{-1} \bq_\sR+\frac1{\theta^2}\nablaR \theta\cdot \bkappa(\bxi+ \bkappa)^{-1}\bxi\nablaR \theta.\]
In order to ensure $\sigma\ge0$ we are forced to require 
\begin{itemize}
  \item [i) ] $\bxi\in\Sym^+ $;
  \item [ii) ] there exists $\beta\ge0$ such that $\bkappa=\beta\bxi$.
\end{itemize} 
A straightforward but boring calculation yields
\[
\frac 1 {\theta}\, \bq_\sR \cdot \nablaR \theta+\rho_\sR \partial_{\nablaR{\theta}} \,\psi\cdot\nablaR \dot{\theta}+\rho_\sR\theta \sigma=\frac 1 {\theta}\, [\bq_\sR+\bxi\nablaR \theta+\tau\bkappa\nablaR \dot\theta]\cdot\bQ_\sR;
\]
this result, inserted in equation  \eqref{eq:OG} gives
\[\dot{\bq}_{\sR} = -\frac 1 {\tau}\Big( [\bq_\sR+\bxi\nablaR \theta+\tau\bkappa\nablaR \dot\theta]\cdot\bQ_\sR\Big)
\frac{\bN}{ |\bQ_\sR|}  + (\bone - \bN\otimes \bN)\bG.\]
Taking into account that $(\bN\otimes\bN)\bG=(\bG\cdot\bN)\bN$ and choosing $\tau\bG=-\bq_\sR-\bxi\nablaR \theta-\tau\bkappa\nablaR \dot\theta$ we obtain \eqref{eq:Jeffreys}. This completes the proof of the following Proposition.

\begin{proposition}\label{Prop_3.1}
The thermodynamic consistency of the Jeffreys model is ensured if and only if $\bxi\in\Sym^+$ and $\bkappa=\beta\bxi$, $\beta\ge0$. 
 In the isotropic case\footnote{ A material is said  isotropic if it  has symmetry properties that are invariant with respect to all rotations and inversions of the frame of axes.} $\bxi=\xi\bone$ and $\bkappa=\kappa\bone$, so that the consistency conditions reduce to $\xi>0$, $\kappa\ge0$. 
\end{proposition}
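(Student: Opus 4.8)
The statement is a biconditional, and I would establish the two implications by rather different routes. For the ``if'' direction I would simply carry out the verification set up just above the proposition: take
\[
\rho_\sR\psi = \rho_\sR\psi_0(\theta) + \frac{\tau}{2\theta}\,[\bq_\sR + \bkappa\nablaR\theta]\cdot(\bxi+\bkappa)^{-1}[\bq_\sR + \bkappa\nablaR\theta],
\]
\[
\rho_\sR\sigma = \frac1{\theta^2}\,\bq_\sR\cdot(\bxi+\bkappa)^{-1}\bq_\sR + \frac1{\theta^2}\,\nablaR\theta\cdot\bkappa(\bxi+\bkappa)^{-1}\bxi\,\nablaR\theta,
\]
and check two points. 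First, when $\bxi\in\Sym^+$ and $\bkappa=\beta\bxi$ with $\beta\ge0$ one has $(\bxi+\bkappa)^{-1}=\tfrac1{1+\beta}\bxi^{-1}\in\Sym^+$ and $\bkappa(\bxi+\bkappa)^{-1}\bxi=\tfrac{\beta}{1+\beta}\bxi$ is symmetric positive semidefinite, so both quadratic forms in $\sigma$ are nonnegative and $\sigma\ge0$. Second, substituting this $\psi$ and $\sigma$ into the representation identity \eqref{eq:OG} and choosing $\tau\bG=-\bq_\sR-\bxi\nablaR\theta-\tau\bkappa\nablaR\dot\theta$ reduces it identically to \eqref{eq:Jeffreys}; this is the computation displayed before the statement, so I would only record it.

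For the ``only if'' direction I would argue from \eqref{eq:heat_rate}, which any thermodynamically consistent realization must obey for some $\psi=\psi(\theta,\bq_\sR,\nablaR\theta)$ and some $\sigma(\Xi_\sR)\ge0$. Solving \eqref{eq:Jeffreys} for $\dot\bq_\sR$ and inserting it, the left-hand side becomes affine in $\nablaR\dot\theta$, which is an unconstrained process variable; since $-\rho_\sR\theta\sigma\le0$ for every value of $\nablaR\dot\theta$, the coefficient of $\nablaR\dot\theta$ must vanish, i.e.\ $\partial_{\nablaR\theta}\psi=\bkappa\,\partial_{\bq_\sR}\psi$. As $\bkappa$ is symmetric (and invertible for $\beta>0$), the method of characteristics forces $\psi$ to depend on $(\bq_\sR,\nablaR\theta)$ only through $\bu:=\bq_\sR+\bkappa\nablaR\theta$. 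Feeding this back, the $\nablaR\dot\theta$-terms cancel, along the model $\dot\bu=-\tau^{-1}(\bq_\sR+\bxi\nablaR\theta)$, and \eqref{eq:heat_rate} collapses to
\[
-\frac{\rho_\sR}{\tau}\,\partial_\bu\psi\cdot\big[\bu+(\bxi-\bkappa)\nablaR\theta\big] + \frac1\theta\,\big[\bu\cdot\nablaR\theta - \nablaR\theta\cdot\bkappa\nablaR\theta\big]\le0,
\]
which, treating $\bu$ and $\nablaR\theta$ as independent, is a quadratic polynomial in $\nablaR\theta$ with leading form $-\theta^{-1}\nablaR\theta\cdot\bkappa\nablaR\theta$, already negative definite. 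Completing the square in $\nablaR\theta$ and imposing nonpositivity of the resulting maximum for every $\bu$ should yield the algebraic constraint linking $\bxi$, $\bkappa$ and $\partial_\bu\psi$ from which $\bxi\in\Sym^+$ (already part of \eqref{eq:Jeffreys}) and $\bkappa=\beta\bxi$, $\beta\ge0$, are to be read off. The isotropic case is then immediate: with $\bxi=\xi\bone$, $\bkappa=\kappa\bone$ the proportionality $\bkappa=\beta\bxi$ holds with $\beta=\kappa/\xi$, so the conditions collapse to $\xi>0$ and $\kappa\ge0$; conversely these make the sufficiency construction go through.

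The routine parts are the vanishing-coefficient step for $\nablaR\dot\theta$ and the characteristics reduction. The main obstacle is the last step — converting the residual quadratic-in-$\nablaR\theta$ inequality into the clean alignment $\bkappa=\beta\bxi$. I expect this to be genuinely delicate and, in particular, to require committing to the quadratic form of $\psi$ used in the sufficiency argument (or an analogous structural restriction): the reduced inequality constrains $\psi$ only through $\bu$, and for a general admissible $\psi(\theta,\bu)$ it is not obvious that $\bkappa$ and $\bxi$ are forced to be proportional, so some input beyond bare nonpositivity of $\sigma$ appears to be needed for the stated ``only if''.
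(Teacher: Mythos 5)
Your ``if'' direction is exactly the paper's argument: the paper's entire proof of Proposition \ref{Prop_3.1} consists of exhibiting the quadratic $\psi$ built on $(\bxi+\bkappa)^{-1}$, the companion $\sigma$, checking that $\sigma\ge0$ under i)--ii), and verifying that the representation formula \eqref{eq:OG} with $\tau\bG=-\bq_\sR-\bxi\nablaR\theta-\tau\bkappa\nablaR\dot\theta$ returns \eqref{eq:Jeffreys}. So for sufficiency you have reproduced the paper, and there is nothing to add.

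The interesting part is your ``only if'' attempt, because the paper does not actually prove necessity at all: it merely asserts that ``in order to ensure $\sigma\ge0$ we are forced to require'' i)--ii), which is a statement about its one particular choice of $\sigma$ (and even for that choice proportionality is not forced --- commuting positive-definite $\bxi,\bkappa$ already make both quadratic forms nonnegative). Your reduction is sound and goes further than the paper: the vanishing of the $\nablaR\dot\theta$-coefficient giving $\partial_{\nablaR\theta}\psi=\bkappa\,\partial_{\bq_\sR}\psi$, the characteristics argument forcing $\psi=\psi(\theta,\bu)$ with $\bu=\bq_\sR+\bkappa\nablaR\theta$, and the identity $\dot\bu=-\tau^{-1}(\bq_\sR+\bxi\nablaR\theta)$ are all correct, and they mirror exactly the manipulations the paper performs for the Quintanilla model in Proposition \ref{qprop} (where a genuine necessity analysis via the positive-semidefiniteness of the matrix $A$ \emph{is} carried out). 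The obstruction you flag at the end is real and is confirmed by the paper itself: immediately after the proposition the authors produce the alternative pair $(\psi^*,\sigma^*)$ built on $(\bxi-\bkappa)^{-1}$, which yields thermodynamic consistency under $\bxi-\bkappa\in\Sym^+$ and $\bkappa$ positive semidefinite --- conditions satisfiable with $\bkappa$ not proportional to $\bxi$ (take $\bxi,\bkappa$ diagonal with $\bxi-\bkappa$ positive definite). Hence the ``only if'' as literally stated is not supported by the paper's own construction, and your conclusion that necessity cannot be extracted from bare nonnegativity of $\sigma$ without committing to a structural ansatz for $\psi$ is the correct diagnosis; the proposition should be read as characterizing when the \emph{specific} quadratic $\psi$ and $\sigma$ displayed before it do the job.
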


It is worth noting that it is possible to make an alternative choice of  free energy and entropy production. To be precise, we can define
\[ \rho_\sR\psi^*= \rho_\sR\psi_0(\theta) + 
\frac{\tau}{2\theta } [\bq_{\sR}+ \bkappa\nablaR \theta]\cdot(\bxi- \bkappa)^{-1} [\bq_{\sR}+ \bkappa\nablaR \theta],\]
with $\bxi-\bkappa\in\Sym^* $, so that
\[ \rho_\sR \partial_{\bq_\sR} \psi=\frac{\tau}{\theta }\bQ^*_\sR
 , \qquad \rho_\sR \partial_{\nablaR{\theta}} \,\psi=\frac{\tau}{\theta }\bkappa\bQ^*_\sR ,\qquad \bN =\frac{\bQ^*_\sR}{|\bQ^*_\sR|},\]
where
\[
\bQ^*_\sR:= (\bxi-\bkappa)^{-1}\big(\bq_\sR +\bkappa\nablaR \theta\big).
\]
After defining the entropy production as
\[\rho_\sR \sigma^* =\frac1{\theta^2}(\bq_\sR +\bkappa\nablaR \theta\big)\cdot(\bxi-\bkappa)^{-1} (\bq_\sR +\bkappa\nablaR \theta\big)+\frac1{\theta^2}\nablaR \theta\cdot \bkappa\nablaR \theta.\]
we get \eqref{eq:Jeffreys} following the same procedure as above. In this case, however,  $\sigma\ge0$ is ensured by imposing a condition stronger than before, namely $\bxi-\bkappa\in\Sym^+$ and $\bkappa$ positive semidefinite.  In the isotropic case, $\bxi=\xi\bone$ and $\bkappa=\kappa\bone$, this condition reduces to $\xi>\kappa\ge0$. 

\begin{remark}\label{non_uniq}
Each function obtained by the convex combination of ${\psi}$ and ${\psi}^*$ satisfies 
\[
   \rho_{\sR}\partial_{ \bq_\sR}\psi_\lambda \cdot { \dot\bq_\sR} + \rho_{\sR} \partial_{\nablaR \theta} \psi_\lambda \cdot \nablaR \dot{\theta}
+ \frac {\bq_{\sR}} \theta \cdot \nablaR \theta=-\rho_\sR\theta\sigma_\lambda .
\]
where $\psi_\lambda=\lambda{\psi}+(1-\lambda){\psi}^*$ and  $\sigma_\lambda=\lambda{\sigma}+(1-\lambda){\sigma}^*$, $\lambda\in[0,1]$. All of them verify the Clausius-Duhem inequality  provided that $\sigma,\sigma^*\ge0$.
Accordingly,  there is an infinite number of thermodynamically consistent free energy functions for the Jeffreys model. 
\end{remark}

\subsection*{Jeffrey's temperature equation}
To describe the evolution of temperature alone we restrict our attention to a rigid body. Hence, the energy balance equation reads
\beq\rho_\sR c_v\dot\theta = - \nablaR \cdot \bq_\sR + \rho_\sR r.\label{eq:energy}\eeq
After combining  \eqref{eq:Jeffreys} with \eqref{eq:energy} we get
 \begin{equation*}
\rho_\sR c_v(\tau \ddot\theta+\dot\theta)= \nablaR \cdot ( \bxi\nablaR \theta+ \tau \bkappa\nablaR \dot\theta)+ \rho_\sR( r  + \tau \dot r).
\end{equation*}
The temperature evolution of this type of models depends 
 on the relative size of the material parameters.  
For the sake of simplicity, we assume the isotropic case, i.e. $\bxi=\xi\bone$ and $\bkappa=\kappa\bone$ where $\xi,\kappa$ are constant, and 
 let $r=0$ (no external heat source). So we get 
\begin{equation}\label{eqsep}
\tau \ddot\theta+\dot\theta=\frac{1}{\rho_\sR c_v}\left( \xi\nabla_\sR^2\theta+\tau\kappa\nabla_\sR^2\dot\theta\right).
\end{equation}
In \cite{CR1} a similar equation, namely eqn.(45), is derived in the framework od classical irreversible thermodynamics with internal variables. Without reasonable justification, it is therein referred to as Guyer-Krumhansl type temperature equation.

Physically, we assume $\tau>0$ to avoid a singularity at $t=\infty$. Equation \eqref{eqsep} is linear and can be solved by the standard techniques, such as the separation of variables. By setting $\theta(\bX, t)= T(t)Y(\bX)$ we get
\[
\left(\tau\ddot T+\dot T\right)Y=\frac{1}{\rho_\sR c_v}\left(\xi T+\tau \kappa\dot{T}\right)\nabla_\sR^2 Y,
\]
Then applying the separation of variables we obtain
\begin{equation}\label{X_qui}
-\nabla_\sR^2 Y= \Lambda Y,
\end{equation}
\begin{equation}\label{T0}
\tau\ddot T+\dot T=-\tilde\Lambda(\xi T+\tau \kappa\dot{T}), \qquad \tilde\Lambda=\frac{\Lambda}{\rho_\sR c_v}.
\end{equation}
Actually with $\Lambda$ here we indicate a set of constants. For a given domain, $\Lambda$ denotes some eigenvalue of the Laplacian operator as defined by \eqref{X_qui} and the boundary conditions imposed to $Y(\bX)$.  Notice that for the most common boundary conditions (Dirichlet, Neumann, Robin, etc.)~the eigenvalues are countable infinite, let say  $\Lambda_n$, $n=\N$, non-negative and not bounded by any constant value. Usually, the eigenvalues are ordered to form an ascending sequence, $\Lambda_n<\Lambda_{n+1}$.

For any fixed value of $\Lambda_n$, the corresponding solution $T_n$ of \eqref{T0} is given by
\[
T_n(t)=C_1e^{w_+t}+C_2e^{w_{-}t}, \quad w_{\pm}= \frac{-(\tilde\Lambda_n\tau\kappa+1)\pm\sqrt{(\tilde\Lambda_n\tau\kappa+1)^2-4\tilde\Lambda_n\xi\tau}}{2\tau}.
\]
According to Proposition \ref{Prop_3.1}, we let $\kappa\ge0$, $\xi>0$. Consequently, both $w_+$ and $w_{-}$ have negative real part. Indeed, when $\kappa=0$ it follows that $w_{\pm}= (-1\pm\sqrt{1-4\tilde\Lambda_n\xi\tau})/{2\tau}$ are complex with negative real part provided that $n$ is sufficiently large.
For a given small value of $\kappa$ there are, in general, a certain number of eigenvalues $\Lambda_n$ that give rise to complex values of $w_{\pm}$: the number of these eigenvalues increases by decreasing $\kappa$. Depending on the existence of complex values of $w_{\pm}$, the solution $T_n$ can be oscillatory but modulated by a an exponential decay in time. For a numerical characterization of these properties the reader can see e.g. \cite{TN}.   As a last remark, let us notice that the conditions $\kappa\ge 0$ and $\xi > 0$ naturally follow by seeking solutions of \eqref{T0} not blowing up at infinity. Indeed, for any value of $\Lambda_n$, if $T_n(t) = e^{w_nt}$, then $w_n$ solves the quadratic equation
\[
\tau w_n^2+(1+\tau\kappa\tilde\Lambda_n)w+\xi\tilde\Lambda_n=0.
\]
Both solutions of the previous equation has negative real part if and only if all the coefficients have the same sign. Since we let $\tau>0$, we are forced to assume $\xi > 0$ and $1+\tau\kappa\tilde\Lambda_n>0$. The last inequality must hold for all values of $n$, namely
\[
\kappa>-{1}/{\tau\tilde\Lambda_n},
\]
so that taking the limit as $n\to\infty$ we get $\kappa\ge 0$.

\subsection{Green-Naghdi type III heat conductors}
A type III heat conductor according to Green and Naghdi \cite{GN} is characterized by
\[
\bq_\sR=- \bkappa\nablaR\theta-\bxi\nablaR \alpha.
\]
where $\alpha$ is the thermal displacement such that $\dot\alpha=\theta$. As remarked in \cite{BFPG} this relation enters the heat equation for the temperature $\theta$ only after a differentiation with respect to time. Therefore it is reasonable to consider here the rate-type model equation
\beq\dot\bq_\sR=-\bxi\nablaR \theta- \bkappa\nablaR \dot\theta,
\label{eq:GNIII}\eeq
which is obtained from Green-Naghdi type III model by derivation with respect to time when $\bxi,\bkappa$ are constant.

Following the procedure of the previous subsection, to prove the thermodynamic consistency of this model we must provide explicit expressions for $\psi$ and $\sigma$.
We start by letting 
\[ \rho_\sR\psi = \rho_\sR\psi_0(\theta) + 
\frac{1}{2\theta } [\bq_{\sR}+ \bkappa\nablaR \theta]\cdot\bxi^{-1} [\bq_{\sR}+ \bkappa\nablaR \theta],\qquad \bxi\in\Sym^*,\]
and we obtain
\[ \rho_\sR \partial_{\bq_\sR} \psi=\frac{1}{\theta }\bxi^{-1}\big(\bq_\sR +\bkappa\nablaR \theta\big)
 , \quad \rho_\sR \partial_{\nablaR{\theta}} \,\psi=\frac{1}{\theta }\bkappa\bxi^{-1}\big(\bq_\sR +\bkappa\nablaR \theta\big) ,\quad \bN =\frac{\bxi^{-1}\big(\bq_\sR +\bkappa\nablaR \theta\big)}{|\bxi^{-1}\big(\bq_\sR +\bkappa\nablaR \theta\big)|}.\]
Now, if we define the entropy production as
\[\rho_\sR \sigma =\frac1{\theta^2}\nablaR \theta\cdot \bkappa\nablaR \theta,\]
a straightforward calculation yields
\[
\frac 1 {\theta}\, \bq_\sR \cdot \nablaR \theta+\rho_\sR \partial_{\nablaR{\theta}} \,\psi\cdot\nablaR \dot{\theta}+\rho_\sR\theta \sigma=\frac 1 {\theta}\, [\bxi\nablaR \theta+\bkappa\nablaR \dot\theta]\cdot\bxi^{-1}\big(\bq_\sR +\bkappa\nablaR \theta\big).
\]
After replacing this result within the representation formula \eqref{eq:OG} we obtain
\[ \begin{split}
\dot{\bq}_{\sR} = - \big([\bxi\nablaR \theta+\bkappa\nablaR \dot\theta]\cdot\bN\big)
{\bN} 
+ (\bone - \bN\otimes \bN)\bG.
\end{split}
\]
Then choosing the arbitrary vector $\bG=-\bxi\nablaR \theta-\bkappa\nablaR \dot\theta$ we recover \eqref{eq:GNIII}.

\begin{proposition}
The thermodynamic consistency of the GN III model in its differential form \eqref{eq:GNIII} is guaranteed if and only if 
$\bxi\in\Sym^*$ and $\bkappa\in\Sym^+$.
\end{proposition}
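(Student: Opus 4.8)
The proof separates into the two implications of the equivalence. The \emph{sufficiency} is already contained in the construction preceding the statement; I would only make explicit where each hypothesis enters. With $\bxi\in\Sym^*$ the inverse $\bxi^{-1}$ exists and is symmetric, so the free energy
\[\rho_\sR\psi=\rho_\sR\psi_0(\theta)+\frac{1}{2\theta}\,[\bq_\sR+\bkappa\nablaR\theta]\cdot\bxi^{-1}[\bq_\sR+\bkappa\nablaR\theta]\]
is well defined, its gradients are those displayed above, and $\partial_{\bq_\sR}\psi$ is generically nonzero so that the unit vector $\bN$ entering the Representation Lemma makes sense; choosing $\bG=-\bxi\nablaR\theta-\bkappa\nablaR\dot\theta$ in \eqref{eq:OG} then returns \eqref{eq:GNIII}. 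Finally, $\bkappa\in\Sym^+$ is precisely what makes the entropy production $\rho_\sR\sigma=\theta^{-2}\,\nablaR\theta\cdot\bkappa\nablaR\theta$ nonnegative, so that \eqref{eq:CD} holds along every process. Thus the substance of the proof is the \emph{necessity}.

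For the converse I would assume \eqref{eq:GNIII} (with $\bxi,\bkappa$ constant) to be thermodynamically consistent, i.e.\ that there exist a free energy $\psi$ and an entropy production $\sigma\ge0$, constitutive functions of $\Xi_\sR$, for which the reduced Clausius--Duhem inequality \eqref{eq:ine1} holds identically, and then run the Coleman--Noll elimination. Substituting $\dot\bq_\sR=-\bxi\nablaR\theta-\bkappa\nablaR\dot\theta$ and $\ddot\bq_\sR=-\bxi\nablaR\dot\theta-\bkappa\nablaR\ddot\theta$ into \eqref{eq:ine1}, the variable $\nablaR\ddot\theta$ is arbitrary and enters only through $\ddot\bq_\sR$, so its coefficient vanishes, giving $\bkappa^T\partial_{\dot\bq_\sR}\psi=\bzero$ and hence (modulo directions of $\ker\bkappa^T$, discussed below) $\psi=\psi(\theta,\bq_\sR,\nablaR\theta)$. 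Collecting the terms linear in the still-arbitrary $\nablaR\dot\theta$ forces the compatibility relation $\partial_{\nablaR\theta}\psi=\bkappa^T\partial_{\bq_\sR}\psi$, whose general solution is $\psi=\Phi\big(\theta,\bq_\sR+\bkappa\nablaR\theta\big)$. What remains of \eqref{eq:ine1} is
\[\rho_\sR\theta\sigma=\rho_\sR\,\bxi^T\partial_{\bq_\sR}\psi\cdot\nablaR\theta-\frac{1}{\theta}\,\bq_\sR\cdot\nablaR\theta\ \ge 0 .\]

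Expanding this about $\nablaR\theta=\bzero$: the zeroth-order term is $0$, so the first-order coefficient must also vanish, giving $\partial_{\bq_\sR}\psi\big|_{\nablaR\theta=\bzero}=\frac{1}{\rho_\sR\theta}(\bxi^T)^{-1}\bq_\sR$; surjectivity of the right side in $\bq_\sR$ forces $\bxi$ to be non-singular, and the symmetry of the Hessian of $\psi$ then forces $(\bxi^T)^{-1}$, hence $\bxi$, to be symmetric, so $\bxi\in\Sym^*$ and $\Phi(\theta,\cdot)=\tfrac{1}{2\rho_\sR\theta}(\cdot)\cdot\bxi^{-1}(\cdot)+\psi_0(\theta)$. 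Feeding this back, the residual collapses to $\rho_\sR\theta\sigma=\theta^{-1}\,\nablaR\theta\cdot\bkappa\nablaR\theta$, and $\sigma\ge0$ for every $\nablaR\theta$ forces the symmetric part of $\bkappa$ to be positive semidefinite; under the standing convention that the transport tensors are symmetric, and reading ``type III'' as a genuinely dissipative conductor (so that $\bkappa$ has trivial kernel, otherwise the model degenerates along those directions to the type II law of the previous subsection), this is exactly $\bkappa\in\Sym^+$.

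The step I expect to be genuinely delicate is precisely this last bridging: the bare Coleman--Noll output is ``$\bxi$ symmetric non-singular and $\bkappa$ with positive-semidefinite symmetric part'', and turning it into the sharp statement $\bxi\in\Sym^*$, $\bkappa\in\Sym^+$ requires carefully separating the thermodynamic content from the conventions on conductivity tensors and from the defining non-degeneracy of type III. This is compounded by the non-uniqueness of the potentials (Remark~\ref{non_uniq}), which forces the whole elimination to be carried out for an \emph{arbitrary} admissible pair $(\psi,\sigma)$ at the level of the functional identities $\bkappa^T\partial_{\dot\bq_\sR}\psi=\bzero$ and $\partial_{\nablaR\theta}\psi=\bkappa^T\partial_{\bq_\sR}\psi$, rather than by inspecting the convenient pair used in the sufficiency part; in particular the directions in $\ker\bkappa^T$, where $\psi$ may a priori still depend on $\dot\bq_\sR$, have to be tracked explicitly and shown not to produce additional dissipation, which is exactly where the non-degeneracy assumption becomes unavoidable. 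Once invertibility and symmetry of the two tensors are secured, the sign conditions fall out of the residual quadratic forms just as in Proposition~\ref{Prop_3.1}.
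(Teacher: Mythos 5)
Your proposal is correct where it overlaps with the paper, but it actually does more than the paper does: the paper's entire justification of this proposition is the sufficiency construction that precedes the statement (exhibit the quadratic $\psi$ in $\bq_\sR+\bkappa\nablaR\theta$, set $\rho_\sR\sigma=\theta^{-2}\nablaR\theta\cdot\bkappa\nablaR\theta$, and recover \eqref{eq:GNIII} from \eqref{eq:OG} with $\bG=-\bxi\nablaR\theta-\bkappa\nablaR\dot\theta$); no argument for the ``only if'' direction is given, in contrast with the Quintanilla case where Proposition \ref{qprop} is proved by a genuine elimination (and that proof explicitly requires $\tau\neq0$, so it does not cover GN III as a limit). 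Your necessity argument --- substituting both $\dot\bq_\sR$ and $\ddot\bq_\sR$ from the rate equation into \eqref{eq:ine1}, killing the coefficients of $\nablaR\ddot\theta$ and $\nablaR\dot\theta$, solving $\partial_{\nablaR\theta}\psi=\bkappa^T\partial_{\bq_\sR}\psi$ by $\psi=\Phi(\theta,\bq_\sR+\bkappa\nablaR\theta)$, and extracting $\bxi^T\partial_{\bq_\sR}\psi\vert_{\nablaR\theta=\bzero}=(\rho_\sR\theta)^{-1}\bq_\sR$ from the first-order expansion of $\sigma\ge0$ about its minimum at $\nablaR\theta=\bzero$ --- is sound and is the missing half of the equivalence. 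You are also right to flag that the bare output of this elimination is ``$\bxi$ symmetric and non-singular, and the symmetric part of $\bkappa$ positive semidefinite'': the residual $\rho_\sR\sigma=\theta^{-2}\nablaR\theta\cdot\bkappa\nablaR\theta$ sees only $\sym\bkappa$, so strict necessity of $\bkappa\in\Sym^+$ does not follow from the Second Law alone; this is a genuine looseness in the proposition as stated, which the paper glosses over because its argument never leaves the sufficiency direction. The one point to watch in your write-up is the $\ker\bkappa^T$ caveat, which you identify but do not fully discharge; for the paper's purposes ($\bkappa$ positive definite) it is vacuous, but if you want the characterization to be sharp you would need to treat the degenerate directions separately, as you anticipate.
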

It is worth noting that the  GN III model does not proper reduces to the Fourier model, since the solutions of the GN III temperature equation  do not converge to the solutions of the Fourier heat equation when $|\bxi|\to0$ (see \cite{GGP} for a detailed discussion).

\subsection*{GN III temperature equation}
The properties of the temperature evolution in a rigid isotropic GN III heat conductor are described by combining \eqref{eq:energy} and (\ref{eq:GNIII}) to obtain
\begin{equation}\label{eq:GNIII1}
\rho_\sR c_v \ddot\theta= \xi\nabla_\sR^2 \theta+\kappa\nabla_\sR^2 \dot\theta+\rho_\sR \dot r.
\end{equation}
This is a linear version of the GN III temperature equation (see, e.g.,  \cite[eqn.(37)]{BFPG}. 
In absence of heat sources, by setting $\theta(\bX, t)= T(t)Y(\bX)$ and applying the separation of variables we obtain \eqref{X_qui} and
\begin{equation}\label{TGNIII}
\ddot T_n=-\tilde\Lambda_n(\xi T_n+ \kappa\dot{T}_n), \qquad \tilde\Lambda_n=\frac{\Lambda_n}{\rho_\sR c_v},
\end{equation}
where $\Lambda_n$ are the eigenvalues generated by \eqref{X_qui} and a given boundary condition. As $n\in\N$, the values $\Lambda_n$ (and therefore also $\tilde\Lambda_n$) are countable infinite, non-negative, sorted in ascending order and unbounded above.
Due to linearity, we let $T_n(t)=e^{w_nt}$ so that each $w_n$ solves the equation 
\begin{equation}\label{wGNIII}
 w_n^2+\tilde\Lambda_n\kappa w_n+\tilde\Lambda_n\xi =0.
\end{equation}
For all solutions to (\ref{wGNIII}) have negative real part, we must impose that all $\Lambda_n$ are positive and $\kappa>0$, $\xi>0$. For these choices of the parameters the model is thermodynamically consistent and the solutions approaches the steady state exponentially in time. Note that if $\kappa=0$ and $\xi>0$ equation (\ref{wGNIII}) admits purely imaginary roots so that solutions to (\ref{TGNIII}) oscillates over time without decaying. In this case, indeed equation (\ref{eq:GNIII1}) reduces to the wave equation and the GN III model collapses to GN II.

\subsection{Quintanilla's heat conduction model}
A new theory of thermoelasticity phenomena have been proposed by Quintanilla in \cite{Quintanilla} by modifying the Green-Naghdi's type III theory that make use of the thermal displacement $\alpha$ as an independent variable.
 We can obtain the constitutive equation proposed by Quintanilla by adding to the MCV model \eqref{eq:MCV} a term involving the gradient of thermal displacement $\alpha$, 
\[
\tau\dot \bq_\sR+ \bq_\sR=- \bkappa\nablaR\theta-\bxi\nablaR \alpha.
\]
Exploiting the term by term correspondences $ \bT_{\sR\sR}\leftrightarrow \bq_\sR$, $\dot\bE\leftrightarrow-\nablaR \theta$ (and $\bE\leftrightarrow-\nablaR \alpha$) we notice that this equation has a mechanical counterpart in the linear rate-type model of linear viscoelasticity.

Assuming that $\tau,\bxi,\bkappa$ take constant values, deriving with respect to time and taking into account that  $\dot\alpha=\theta$ (the absolute temperature), we obtain the rate-type Quintanilla model
\beq
\tau\ddot \bq_\sR+\dot \bq_\sR=-\bxi\nablaR \theta- \bkappa\nablaR \dot\theta.\label{eq:Q}
\eeq
Equation  \eqref{eq:GNIII} for GN III conductors is recovered when $\tau\to0^+$. 
\begin{proposition}\label{qprop}
The rate-type Quintanilla model \eqref{eq:Q} with $\tau\neq0$ is thermodynamically consistent with the reduced thermodynamic inequality (\ref{eq:ine1}) if and only if $\bxi\in\Sym^*$ and $\bkappa-\tau\bxi\in\Sym^+$. In  the isotropic version
\begin{equation}
\tau\ddot \bq_\sR+\dot \bq_\sR=-\xi\nablaR \theta- \kappa\nablaR \dot\theta,\label{eq:Q_iso}
\end{equation}
that follows from  \eqref{eq:Q} by letting $\bxi=\xi\bone$ and $\bkappa=\kappa\bone$, the consistency condition becomes $\xi\neq0$ and $\kappa>\tau\xi$.
\end{proposition}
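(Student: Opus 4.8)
\emph{Proof proposal.} The plan is to mimic the treatment of Proposition~\ref{Prop_3.1} (Jeffreys) and of the GN~III model above, now working with the reduced second-order inequality \eqref{eq:ine1}, and to establish the two implications separately. The new ingredient compared with those cases is that the free energy must genuinely depend on $\dot\bq_\sR$, so that the Representation Lemma \eqref{eq:BNG} can be applied with $\bZ=\ddot\bq_\sR$ and $\bN=\partial_{\dot\bq_\sR}\psi/|\partial_{\dot\bq_\sR}\psi|$ to extract \eqref{eq:Q} from \eqref{eq:ine1}, in the spirit of the Burgers/Oldroyd-B analysis of \cite{MorroGiorgi_MDPI}.

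\textbf{Sufficiency.} The key remark is that \eqref{eq:Q} splits as $\bq_\sR=\bq^{(1)}+\bq^{(2)}$ with
\[
\bq^{(1)}:=-(\tau\dot\bq_\sR+\bkappa\nablaR\theta),\qquad \bq^{(2)}:=\bq_\sR+\tau\dot\bq_\sR+\bkappa\nablaR\theta,
\]
both \emph{algebraic} functions of the state $(\theta,\bq_\sR,\dot\bq_\sR,\nablaR\theta)$, such that $\bq^{(1)}$ obeys a Maxwell--Cattaneo--Vernotte law $\tau\dot\bq^{(1)}+\bq^{(1)}=-(\bkappa-\tau\bxi)\nablaR\theta$ of the form \eqref{eq:MCV} with conductivity $\bkappa-\tau\bxi$, while $\bq^{(2)}$ obeys a Green--Naghdi type~II law $\dot\bq^{(2)}=-\bxi\nablaR\theta$. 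Accordingly I would take
\[
\rho_\sR\psi=\rho_\sR\psi_0(\theta)+\frac{\tau}{2\theta}\,\bq^{(1)}\cdot(\bkappa-\tau\bxi)^{-1}\bq^{(1)}+\frac{1}{2\theta}\,\bq^{(2)}\cdot\bxi^{-1}\bq^{(2)},
\]
\[
\rho_\sR\sigma=\frac{1}{\theta^{2}}\,\bq^{(1)}\cdot(\bkappa-\tau\bxi)^{-1}\bq^{(1)},
\]
which already forces $\bxi$ and $\bkappa-\tau\bxi$ to be non-singular. Computing the derivatives of $\psi$ (recall $\eta=-\partial_\theta\psi$) and substituting in \eqref{eq:ine1}, one verifies that the inequality holds with this $\sigma$, and that applying \eqref{eq:BNG} with $\bZ=\ddot\bq_\sR$, $\bN=\partial_{\dot\bq_\sR}\psi/|\partial_{\dot\bq_\sR}\psi|$ and then choosing the arbitrary vector $\bG=-\tfrac1\tau(\dot\bq_\sR+\bxi\nablaR\theta+\bkappa\nablaR\dot\theta)$ collapses the representation formula exactly to \eqref{eq:Q} (the computation being insensitive to the sign of $\tau$). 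Since $\sigma\ge0$ along every process if and only if $(\bkappa-\tau\bxi)^{-1}$ --- equivalently $\bkappa-\tau\bxi$ --- is positive definite, while only symmetry and non-singularity of $\bxi$ are used, this establishes sufficiency; the isotropic reduction $\bxi=\xi\bone$, $\bkappa=\kappa\bone$ gives $\xi\neq0$ and $\kappa>\tau\xi$.

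\textbf{Necessity.} Conversely, assume there are a $C^1$ free energy $\psi=\psi(\theta,\bq_\sR,\dot\bq_\sR,\nablaR\theta)$ and $\sigma\ge0$ satisfying \eqref{eq:ine1} along every process obeying \eqref{eq:Q}. Eliminating $\ddot\bq_\sR$ through \eqref{eq:Q}, \eqref{eq:ine1} becomes a scalar relation \emph{affine} in the still-arbitrary variable $\nablaR\dot\theta$; since its left-hand side must stay $\le0$ for all values, the linear coefficient must vanish, i.e.\ $\partial_{\nablaR\theta}\psi=\tfrac1\tau\bkappa^{T}\partial_{\dot\bq_\sR}\psi$, so that $\psi$ depends on $(\dot\bq_\sR,\nablaR\theta)$ only through the combination $\bq^{(1)}$. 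Feeding this back and expressing $\dot\bq_\sR$ in terms of $\bq^{(1)}$ and $\nablaR\theta$, the residual inequality is affine in $\nablaR\theta$ as well; its vanishing linear coefficient determines $\bq_\sR$ algebraically in terms of the gradients of $\psi$, while the remaining term must be $\le0$ for all $(\bq_\sR,\bq^{(1)})$. Expanding $\rho_\sR\psi$ as a quadratic form in $(\bq_\sR,\bq^{(1)})$ --- as is natural for linear rate-type models --- these two requirements turn into linear-algebraic identities for the coefficient blocks that, in turn, force $\bxi$ symmetric and non-singular and $\bkappa-\tau\bxi$ symmetric positive definite; the isotropic specialization is then immediate.

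\textbf{Main obstacle.} The delicate step is necessity. Unlike the Jeffreys case, where \eqref{eq:heat_rate} already displays the relevant quadratic structure, here $\ddot\bq_\sR$ must first be removed, and one must control the freedom in the choice of $\psi$ --- which by Remark~\ref{non_uniq} is considerable --- to be sure it does not enlarge the admissible parameter region. The subtle point is that $\nablaR\dot\theta$ is an independent variable on which $\sigma$ may a priori depend; the clean way around this is the successive exploitation of the arbitrariness of $\nablaR\dot\theta$ and then of $\nablaR\theta$ indicated above, together with the reduction to a quadratic free energy (standard for linear models, but in principle requiring a short justification via further differentiation of the Coleman--Noll identities).
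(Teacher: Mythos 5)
Your proposal is correct in substance and, on the sufficiency side, takes a genuinely different and more transparent route than the paper. The paper proves the proposition (explicitly only in the isotropic case) by positing the general quadratic free energy \eqref{eq:psi_Q}, eliminating $\ddot\bq_\sR$ via \eqref{eq:Q_iso}, extracting $\partial_{\nablaR\theta}\psi=\tfrac{\kappa}{\tau}\partial_{\dot\bq_\sR}\psi$ from the arbitrariness of $\nablaR\dot\theta$, and then working through the principal minors of the resulting quadratic form $A$ to pin down all coefficients and the parameter conditions; the explicit $\psi$ and $\sigma$ are displayed only afterwards. Your decomposition $\bq_\sR=\bq^{(1)}+\bq^{(2)}$ into an MCV part with conductivity $\bkappa-\tau\bxi$ and a GN~II part is correct (both evolution identities follow directly from \eqref{eq:Q}), your $\sigma$ coincides exactly with the paper's $\rho_\sR\sigma=\theta^{-2}(\tau\dot\bq_\sR+\bkappa\nablaR\theta)\cdot(\bkappa-\tau\bxi)^{-1}(\tau\dot\bq_\sR+\bkappa\nablaR\theta)$, and your $\psi$ agrees with the paper's after the identity $\tau(\bkappa-\tau\bxi)^{-1}+\bxi^{-1}=\bkappa(\bkappa-\tau\bxi)^{-1}\bxi^{-1}$ (valid when $\bkappa$ and $\bxi$ commute, as in the isotropic case). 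The verification of \eqref{eq:ine1} does close: the $\bxi^{-1}\bq^{(2)}$ terms assemble into $\bxi^{-1}\bq^{(2)}\cdot\dot\bq^{(2)}=-\tfrac1\theta\bq^{(2)}\cdot\nablaR\theta$, the $(\bkappa-\tau\bxi)^{-1}\bq^{(1)}$ terms into the MCV dissipation plus $-\tfrac1\theta\bq^{(1)}\cdot\nablaR\theta$, and the sum cancels against $\tfrac1\theta\bq_\sR\cdot\nablaR\theta$. This route has the merit of explaining structurally where the paper's rather opaque free energy comes from.

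The real shortfall is the necessity direction, which is where essentially all of the paper's proof lives. Your reductions are sound and even slightly cleaner than the paper's (vanishing of the $\nablaR\dot\theta$ coefficient, hence $\psi$ depending on the rate variables only through $\bq^{(1)}$; then affineness in $\nablaR\theta$ and vanishing of its coefficient). But the final claim --- that positive semidefiniteness of the residual quadratic form forces $\xi\neq0$ and $\kappa>\tau\xi$ --- is asserted, not carried out, and it is not automatic: in the paper it is a cascade in which $A_{11}=0$ forces $A_{12}=A_{13}=0$, which fixes $\gamma_1=\tau/\xi\theta$ and $\alpha_1=1/\xi\theta$ (already requiring $\xi\neq0$), then $d_1=A_{22}A_{33}-A_{23}^2$ reduces to minus a perfect square whose vanishing fixes $\alpha_2$, after which nonnegativity of the diagonal entries is equivalent to $\kappa>\tau\xi$ (with the degenerate subcase $\kappa=0$, $\tau\xi<0$ subsumed). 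You need to execute this computation, or its analogue in your $(\bq_\sR,\bq^{(1)})$ variables, before the ``only if'' is established. You correctly flag that both your argument and the paper's restrict necessity to quadratic free energies; that caveat is inherited from the paper and is not an additional defect of your proposal.
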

\begin{proof}
For simplicity we give the proof in the isotropic case only. 
Let $\tau\neq0$ and consider $\ddot{\bq}_{\sR}$ as a function of   
$ \dot{\bq}_{\sR}, \nablaR \theta , \nablaR \dot\theta$. Upon substitution of $\ddot{\bq}_{\sR}$ from (\ref{eq:Q_iso}) into \eqref{eq:ine1}, we have
\[\begin{split} 
\rho_{\sR}\Big(\partial_{ \bq_\sR}\psi -\frac1\tau\partial_{ \dot\bq_\sR}\psi \Big)\cdot { \dot\bq_\sR}
+ \Big(\frac {\bq_{\sR}} \theta -\frac\xi\tau\rho_{\sR} \partial_{ \dot\bq_\sR}\psi\Big) \cdot\nablaR \theta
& \\
+\rho_{\sR}\Big( \partial_{\nablaR \theta} \psi-\frac{\kappa}\tau\partial_{ \dot\bq_\sR}\psi \Big)\cdot\nablaR \dot\theta 
&=-\rho_\sR\theta\sigma \le 0.\end{split}\]
Since $\psi$ is independent of $\nablaR \dot\theta$, the linearity and arbitrariness of $\nablaR \dot\theta$ imply
\beq \partial_{\nablaR \theta} \psi=\frac{\kappa}\tau\partial_{ \dot\bq_\sR}\psi \label{eq:resQ0}\eeq
and then
 \beq
 \rho_{\sR}\Big(\frac1\tau\partial_{ \dot\bq_\sR}\psi -\partial_{ \bq_\sR}\psi \Big)\cdot { \dot\bq_\sR}
+ \Big(\frac\xi\tau\rho_{\sR} \partial_{ \dot\bq_\sR}\psi-\frac {\bq_{\sR}} \theta \Big) \cdot\nablaR \theta
=\rho_\sR\theta\sigma \ge 0.
\label{eq:redQ}  \eeq
Now we select the dependence on $\bq_\sR,\dot\bq_\sR,\nablaR \theta$ of the free energy $\psi$ in a quadratic form,
\beq\rho_\sR\psi = \rho_\sR\psi_0+\frac{\alpha_1}2 |\bq_{\sR}|^2 +\frac{\alpha_2}2 |\dot\bq_{\sR}|^2+\frac{\alpha_3}2 |\nablaR \theta|^2+ \gamma_1\dot{\bq}_{\sR} \cdot \bq_{\sR}  +\gamma_2\bq_{\sR} \cdot \nablaR \theta +\gamma_3\dot\bq_{\sR} \cdot \nablaR \theta,
\label{eq:psi_Q}\eeq
whence
\[\begin{split}\rho_\sR\partial_{{\bq}_{\sR}}\psi &= \alpha_1 \bq_{\sR}+\gamma_1 \dot\bq_{\sR}+\gamma_2\nablaR \theta, \qquad
\rho_\sR\partial_{\dot\bq_{\sR}}\psi =\gamma_1 \bq_{\sR}+\alpha_2 \dot\bq_{\sR}+\gamma_3\nablaR \theta,  \\
\rho_\sR\partial_{\nablaR \theta}\psi &=\gamma_2 \bq_{\sR}+\gamma_3 \dot\bq_{\sR}+\alpha_3\nablaR \theta.
\end{split}\]
All parameters and $\psi_0$ possibly depend on $\theta$.
Upon substitution into \eqref{eq:resQ0} we obtain
\beq
\gamma_2=  \frac{\kappa}{\tau}\gamma_1, \qquad \gamma_3=  \frac{\kappa}{\tau}\alpha_2,\qquad
\alpha_3 =  \frac{\kappa}{\tau}\gamma_3, 
\label{eq_Q}\eeq
while \eqref{eq:redQ} becomes 
\beq\begin{split}
A_{22} |\dot\bq_{\sR}|^2+ A_{33} |\nablaR \theta|^2+2 A_{12}\dot\bq_{\sR}\cdot\bq_{\sR}
+2 A_{13}\bq_{\sR}\cdot\nablaR \theta+2 A_{23}\dot\bq_{\sR}\cdot\nablaR \theta=\rho_\sR\theta \sigma \ge 0, 
\end{split}\label{ineq_Q}\eeq
where $A_{11}=0$ and 
\[ \begin{split}
 A_{22} &=\frac1\tau\alpha_2-\gamma_1,\quad  A_{33} = \frac{\xi}\tau\gamma_3 ,\quad 2 A_{12}=\frac1\tau\gamma_1-\alpha_1, \\
2 A_{13} &=-\frac1\theta+\frac{\xi}\tau\gamma_1,\quad 
2 A_{23} = \frac1\tau\gamma_3-\gamma_2+ \frac{\xi}\tau\alpha_2.
\end{split}\]

Inequality \eqref{ineq_Q} is satisfied for all values of ${\bq}_{\sR}$, $\dot{\bq}_{\sR}$ and ${\nablaR \theta}$ provided that the symmetric matrix $A=\{A_{ij}\}_{i,j=1,2,3}$  is positive-semidefinite, that is if and only if all principal minors of $A$ are nonnegative (see, for instance \cite[\S\,7.6]{CDM}). Since $A_{11}=0$, we consider the 2-by-2 principal minor
\[d_3:=\det
\begin{pmatrix}
A_{11}    &   A_{12} \\
  A_{12}    &  A_{22}
\end{pmatrix}
=A_{11}A_{22}-A^2_{12}=-\frac1{4\tau^2}(\gamma_1-\tau\alpha_1)^2.\]
If $\gamma_1\neq\tau\alpha_1$ it takes a negative value
and the thesis could not be proved. Hence we let $\alpha_1=\gamma_1/\tau$ so that $A_{12}=0$. Then we consider
\[d_2:=\det
\begin{pmatrix}
A_{11}    &   A_{13} \\
  A_{13}    &  A_{33}
\end{pmatrix}
=A_{11}A_{33}-A^2_{13}=-\frac1{4\theta^2\tau^2}({\xi}\theta\gamma_1-\tau)^2.\]
If $\xi=0$ or $\gamma_1\neq\tau/\xi\theta$ this minor takes a negative value; therefore, we let 
\[\xi\neq0, \qquad \gamma_1=\tau/\xi\theta,\]
 so that $A_{13}=0$. Summarizing and applying \eqref{eq_Q}$_1$
\beq
\gamma_1=\frac\tau{\xi\theta}, \quad \alpha_1=\frac1{\xi\theta}, \quad \gamma_2=\frac\kappa{\xi\theta}, \quad \xi\neq0.
\label{eq:summ}\eeq
Finally we consider
\[\begin{split}d_1:=\det
\begin{pmatrix}
A_{22}    &   A_{23} \\
  A_{23}    &  A_{33}
\end{pmatrix}
&=A_{33}A_{22}-A^2_{23}=\Big(\frac\kappa{\tau^2}\alpha_2-\frac{\kappa}{\xi\theta}\Big) \frac{\xi}{\tau}\alpha_2-\frac1{4}\Big( \frac{\kappa}{\tau^2}\alpha_2-\frac\kappa{\xi\theta}+ \frac{\xi}\tau\alpha_2\Big)^2\\
&=-\frac1{4}\Big( \frac{\kappa}{\tau^2}\alpha_2-\frac\kappa{\xi\theta}- \frac{\xi}\tau\alpha_2\Big)^2
\end{split}\]
where we applied \eqref{eq:summ} and \eqref{eq_Q}$_2$. If the quantity in brackets does not vanish this minor takes a negative value; therefore, we assume it vanishes by letting 
\begin{itemize}
  \item[i) ] either $\kappa=0$ and $\alpha_2=0$,
  \item[ii) ] or $\kappa\neq0$,  $\kappa\neq\tau\xi$ and
\[\alpha_2=\frac{\kappa\tau^2}{\xi\theta(\kappa-\tau\xi)}.\]
\end{itemize}

In the former case (i) we have $\alpha_2=\alpha_3=\gamma_2=\gamma_3=0$ and then
\[A_{11} =A_{33} =A_{12}=A_{13} =A_{23} = 0 ,\quad A_{22} =-\frac\tau{\xi\theta}.
\]
Accordingly, $A$  is positive-semidefinite if and only if $\kappa=0$, $\tau,\xi\neq0$ and $\tau\xi<0$. 

Under the assumptions (ii), exploiting \eqref{eq_Q}$_{2,3}$ we obtain
\[\gamma_3= \frac{\kappa^2\tau}{\xi\theta(\kappa-\tau\xi)},\qquad
\alpha_3 = \frac{\kappa^3}{\xi\theta(\kappa-\tau\xi)},\]
and then we are able to represent $A$ as a function of $\theta$ and the material parameters $\tau,\xi,\kappa$,
\[A_{11}=A_{12}=A_{13}=0, \quad A_{22}=\frac{\tau^2}{\theta(\kappa-\tau\xi)},\quad A_{33}=\frac{\kappa^2}{\theta(\kappa-\tau\xi)},\quad A_{23}=\frac{\kappa\tau}{\theta(\kappa-\tau\xi)}.\]
Accordingly, we infer that $A$  is positive-semidefinite if and only if $\tau,\kappa,\xi\neq0$ and $\kappa>\tau\xi$. 
\end{proof}

It can easily be verified by direct substitution into (\ref{eq:ine1}) that the entropy production related to the Quintanilla model \eqref{eq:Q} takes the form
\[
\rho_\sR \sigma=\frac{1}{\theta^2}(\tau\dot\bq_{\sR}+\bkappa\nablaR \theta)\cdot(\bkappa-\tau\bxi)^{-1} (\tau\dot\bq_{\sR}+\bkappa\nablaR \theta)
\]
and the free energy reads
\[ \begin{split}\rho_\sR\psi = \rho_\sR\psi_0(\theta) &+\frac{1}{2\theta}\big[\tau\dot\bq_{\sR}+\bkappa\nablaR \theta\big]\cdot{\bkappa}{(\bkappa-\tau\bxi)^{-1}} \bxi^{-1} \big[\tau\dot\bq_{\sR}+\bkappa\nablaR \theta\big]
\\&+\frac{1}{2\theta}\Big( \bq_{\sR}
+ 2[\tau\dot{\bq}_{\sR}+\bkappa \nablaR \theta ]\Big)\cdot\bxi^{-1}  \bq_{\sR}.
\end{split}\]
Letting $\tau\to0$ the entropy production and  free energy of the GN  III model are recovered. 
Hence, the Quintanilla model represents a proper generalization of the GN III linear theory, and as such it cannot be considered as
comprehensive of the Fourier theory in a proper sense.

\subsection*{MGT temperature equation}
The properties of temperature evolution can be understood by looking at the equation obtained from \eqref{eq:energy} and (\ref{eq:Q_iso}), namely
\begin{equation}\label{eqqui1}
\tau \dddot\theta + \ddot\theta=\frac{1}{\rho_\sR c_v}\left( \xi\nabla_\sR^2 \theta+\kappa\nabla_\sR^2 \dot\theta\right).
\end{equation}
This looks like a linear version of the well-known Moore-Gibson-Thompson equation which arises in the modeling of wave propagation in viscous thermally relaxing fluids \cite{Thompson}. Since the unknown function of the MGT model is mass density and not temperature, the two equations show a purely formal analogy.

By setting $\theta(\bX, t)= T(t)Y(\bX)$ it follows
\[
\left(\tau\dddot T+\ddot T\right)Y=\frac{1}{\rho_\sR c_v}\left(\xi T+\kappa\dot{T}\right)\nabla_\sR^2 Y.
\]
Then applying the separation of variables we obtain
\begin{equation}\label{Tqui}
\tau\dddot T+\ddot T=-\tilde\Lambda(\xi T+ \kappa\dot{T}),\qquad \tilde\Lambda=\frac{\Lambda}{\rho_\sR c_v},
\end{equation}
where $\Lambda\in\{\Lambda_n\}_{n=\N}$, the sequence of the eigenvalues of the unidimensional Laplacian operator defined by \eqref{X_qui} and the given boundary conditions.
If we let $T(t)=e^{wt}$, $w$ solves the cubic equation 
\begin{equation}\label{wqui}
\tau w^3+ w^2+\tilde\Lambda\kappa w+\tilde\Lambda\xi =0.
\end{equation}
To avoid a blow up of the solutions at infinity, we look at decaying or oscillating solutions of the equation (\ref{Tqui}). It is known that a polynomial $a_3w^3+a_2w^2+a_1w+a_0$ has all three roots with negative real part if and only if all the coefficients are of the same sign and $a_2 a_1>a_0 a_3$ (the so-called Routh-Hurwitz criterion). This provides the following conditions on our parameters,
\begin{equation}\label{quicase}
\tau >0, \; \xi >0, \; \kappa >0, \; \kappa>\tau\xi,
\end{equation}
which are consistent with conditions given in Proposition \ref{qprop}. To get a more precise understanding of the behavior of the solutions, we look at the discriminant of  (\ref{wqui}). After defining
\[
\textrm{Discr$_w$}=\tau^4(w_1-w_2)^2(w_1-w_3)^2(w_2-w_3)^2,
\]
where $w_i$, $i=1,2,3$ are the roots of equation (\ref{wqui}), we get\footnote{\,Given a cubic equation $ax^3 + bx^2 + cx + d = 0$ its discriminant is $ 18abcd - 4b^3d + b^2c^2 - 4ac^3 - 27a^2d^2$.}
\[
\textrm{Discr$_w$}=-\tilde\Lambda\left(4\kappa^3\tau\tilde\Lambda^2+\left(9\tau\xi(3\tau\xi-2\kappa)\right)\tilde\Lambda+4\xi\right).
\]
Accordingly, for any suitable positive value of the constants $\tau$, $\kappa$, $\xi$ and for sufficiently large values of $\Lambda$ the discriminant is negative, and then a couple of roots, say $w_2$ and $w_3$, takes conjugate complex values (with negative real part if the conditions (\ref{quicase}) are fulfilled). Since, in general, the sequence $\{\Lambda_k\}_{k\in\N}$ is unbounded, this implies that solutions to (\ref{Tqui}) oscillates with an exponential damping. This type of behavior is close to that of the MCV model \cite{SZ}. Remarkably,  conditions \eqref{quicase} also ensures well-posedness and stability of solutions in three dimensions \cite{Quintanilla}.


%

\subsection{Heat conductors of the Burgers type}

Heat conductors of the Burgers type are characterized by the rate-type equation
\beq
\lambda\ddot \bq_\sR+\tau\dot \bq_\sR+\bq_\sR=-\bmu\nablaR \theta- \tau\bnu\nablaR \dot\theta.\label{JCL}
\eeq
This model can be obtained, as for the Burgers' rheological model, by considering a mixture of two substances characterized by a conduction mechanims described by the Maxwell-Cattaneo equation (\ref{eq:MCV}), namely
\[
 \tau_1\dot\bq_\sR^{(1)}+ \bq_\sR^{(1)}=-\bkappa^{(1)}\nablaR \theta,
 \qquad
  \tau_2\dot\bq_\sR^{(2)}+ \bq_\sR^{(2)}=-\bkappa^{(2)}\nablaR \theta.
\]
Equation (\ref{JCL}) can be obtained by combining the previous two equations and setting
\beq
\tau=\tau_1+\tau_2,\qquad \lambda=\tau_1\tau_2, \qquad \mu=\bkappa^{(1)}+\bkappa^{(2)}, \qquad \tau\bnu=\tau_1\bkappa^{(2)}+\tau_2\bkappa^{(1)}.
\label{Burg_parameters}\eeq
For simplicity we restrict our attention to the isotropic equation
\beq
\lambda\ddot \bq_\sR+\tau\dot \bq_\sR+\bq_\sR=-\mu\nablaR \theta- \tau\nu\nablaR \dot\theta,\label{eq:BSE_iso}
\eeq
that follows from  \eqref{JCL} by letting $\bmu=\mu\bone$ and $\bnu=\nu\bone$, and we investigate the consistency of this evolution equation with the reduced thermodynamic inequality (\ref{eq:ine1}). 


For definiteness, we let $\lambda\neq0$.  Otherwise, the Burgers model degenerates into simpler models. The Jeffreys model is recovered if $\lambda=0$. When $\lambda$ and $\bnu$ vanish, then \eqref{eq:BSE_iso} reduces to the MCV model and if in addition $\tau=0$ the Fourier law is recovered.
Consequently, the Burgers' model, unlike the GN III and Quintanilla's models, contains the Fourier theory in a proper sense.

Owing to $\lambda\neq0$, we consider $\ddot{\bq}_{\sR}$ as a function of   
${\bq}_{\sR}, \dot{\bq}_{\sR}, \nablaR \theta , \nablaR \dot\theta$. Upon substitution of $\ddot{\bq}_{\sR}$ from (\ref{eq:BSE_iso}) into \eqref{eq:ine1}, we have
\[\begin{split} 
\rho_{\sR}\Big(\partial_{ \bq_\sR}\psi -\frac\tau\lambda\partial_{ \dot\bq_\sR}\psi \Big)\cdot { \dot\bq_\sR}
-\rho_{\sR}\frac1\lambda\partial_{ \dot\bq_\sR}\psi \cdot  \bq_\sR
+ \Big(\frac {\bq_{\sR}} \theta -\frac\mu\lambda\rho_{\sR} \partial_{ \dot\bq_\sR}\psi\Big) \cdot\nablaR \theta
& \\
+\rho_{\sR}\Big( \partial_{\nablaR \theta} \psi-\frac{\tau\nu}\lambda\partial_{ \dot\bq_\sR}\psi \Big)\cdot\nablaR \dot\theta 
&=-\rho_\sR\theta\sigma \le 0.\end{split}\]
Since $\psi$ is independent of $\nablaR \dot\theta$, the linearity and arbitrariness of $\nablaR \dot\theta$ imply
\beq \partial_{\nablaR \theta} \psi=\frac{\tau\nu}\lambda\partial_{ \dot\bq_\sR}\psi \label{eq:resBur0}\eeq
and then
 \beq
 \rho_{\sR}\Big(\partial_{ \bq_\sR}\psi -\frac\tau\lambda\partial_{ \dot\bq_\sR}\psi \Big)\cdot { \dot\bq_\sR}
-\rho_{\sR}\frac1\lambda\partial_{ \dot\bq_\sR}\psi \cdot  \bq_\sR
+ \Big(\frac {\bq_{\sR}} \theta -\frac\mu\lambda\rho_{\sR} \partial_{ \dot\bq_\sR}\psi\Big) \cdot\nablaR \theta
=-\rho_\sR\theta\sigma \le 0.
\label{eq:redBur}  \eeq
Now we select  the free energy $\psi$ as in \eqref{eq:psi_Q},  so that
upon substitution into \eqref{eq:resBur0} and \eqref{eq:redBur} we obtain
\beq
\gamma_2=\frac{\tau\nu}\lambda\gamma_1,\qquad \gamma_3 =\frac{\tau\nu}\lambda\alpha_2,\qquad \alpha_3 =\frac{\tau\nu}\lambda\gamma_3,
\label{eq:eq}\eeq
\beq\begin{split}
A_{11} |\bq_{\sR}|^2+A_{22} |\dot\bq_{\sR}|^2+ A_{33} |\nablaR \theta|^2+2 A_{12}\dot\bq_{\sR}\cdot\bq_{\sR}&
\\+2 A_{13}\bq_{\sR}\cdot\nablaR \theta+2 A_{23}\dot\bq_{\sR}\cdot\nablaR \theta&=\rho_\sR\theta \sigma \ge 0, 
\end{split}\label{ineq}\eeq
where
\[ \begin{split}
A_{11} &=\frac1\lambda\gamma_1,\quad A_{22} =\frac\tau\lambda\alpha_2-\gamma_1,\quad A_{33} = \frac{\mu}\lambda\gamma_3 ,\\
2A_{12}&=\frac\tau\lambda\gamma_1+\frac1\lambda\alpha_2-\alpha_1,\quad 2A_{13} =-\frac1\theta+\frac1\lambda
\gamma_3+\frac{\mu}\lambda\gamma_1,\quad 2A_{23} = \frac\tau\lambda\gamma_3-\gamma_2+ \frac{\mu}\lambda\alpha_2.
\end{split}\]

\begin{proposition}\label{Burcon}
The Burgers-like model \eqref{eq:BSE_iso} with $\lambda\neq0$ is thermodynamically consistent if and only if one of the following hypotheses occurs 
\begin{itemize}
  \item[i)] $\tau\nu = 0$ and $\mu>0$, $\lambda<0$;
  \item[ii)]  $\nu\tau\neq0$ and $\mu=0$, $\nu>0$;
   \item[iii)]  $\nu\tau\neq0$ and $\mu>0$, $\nu\tau^2\ge\lambda\mu$.
\end{itemize}
\end{proposition}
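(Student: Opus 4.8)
Recall that, with the quadratic ansatz \eqref{eq:psi_Q} for $\psi$ and the constraints \eqref{eq:eq} in force, the reduced inequality \eqref{eq:redBur} has been cast into the form \eqref{ineq}, which holds for all $\bq_\sR,\dot\bq_\sR,\nablaR\theta$ exactly when the symmetric matrix $A=\{A_{ij}\}$ is positive semidefinite, i.e.\ when every principal minor of $A$ is nonnegative \cite[\S\,7.6]{CDM}. Of the coefficients of $\psi$, equations \eqref{eq:eq} express $\gamma_2,\gamma_3,\alpha_3$ through $\gamma_1,\alpha_2$, and $\psi_0$ does not enter $A$; hence only $\alpha_1,\alpha_2,\gamma_1$ remain at our disposal. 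I would first spend $\alpha_1$ to normalise $A_{12}=0$, which amounts to $\alpha_1=\tfrac\tau\lambda\gamma_1+\tfrac1\lambda\alpha_2$ and is harmless since $\alpha_1$ occurs in no other entry.

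Next I would branch according to whether $\tau\nu$ and $\mu$ vanish. If $\tau\nu=0$, the constraints \eqref{eq:eq} give $\gamma_2=\gamma_3=\alpha_3=0$, so $A_{33}=0$; nonnegativity of the $2\times2$ minors $A_{11}A_{33}-A_{13}^2$ and $A_{22}A_{33}-A_{23}^2$ then forces $A_{13}=A_{23}=0$. These two equations determine $\gamma_1=\lambda/(\mu\theta)$ (which already requires $\mu\neq0$, since otherwise $2A_{13}=-1/\theta\neq0$) and $\alpha_2=0$, leaving $A$ diagonal with $A_{11}=1/(\mu\theta)$ and $A_{22}=-\lambda/(\mu\theta)$; as $\theta>0$ and $\lambda\neq0$, these are nonnegative iff $\mu>0$ and $\lambda<0$, which is case~(i). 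If instead $\tau\nu\neq0$ but $\mu=0$, again $A_{33}=0$ and the same two minors force $A_{13}=A_{23}=0$; now they pin $\alpha_2=\lambda^2/(\tau\nu\theta)$ and $\gamma_1=\lambda/(\nu\theta)$, whence $A_{22}=0$ and $A$ reduces to the single entry $A_{11}=1/(\nu\theta)$, nonnegative iff $\nu>0$, which is case~(ii).

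Finally, for $\tau\nu\neq0$ and $\mu\neq0$ --- the substantive case --- the off-diagonal entries cannot all be annihilated at once, so I would keep both $\gamma_1$ and $\alpha_2$ and inspect the three $2\times2$ principal minors together with $\det A$, everything reduced to functions of $\theta,\tau,\lambda,\mu,\nu$. The signs of $A_{11}=\gamma_1/\lambda$ and $A_{33}=\mu\tau\nu\,\alpha_2/\lambda^2$ can be arranged freely through the signs of $\gamma_1$ and $\alpha_2$, so they impose nothing; the binding requirement is that the remaining minor survive. The cleanest implementation is to first impose $A_{13}=0$ --- a linear relation between $\gamma_1$ and $\alpha_2$ --- reducing to a one-parameter family in which $\det A$ and the block minor $A_{22}A_{33}-A_{23}^2$ become expressions of the perfect-square type met in the proof of Proposition~\ref{qprop}; their nonnegativity then collapses to $\mu>0$ and the single inequality $\nu\tau^2\ge\lambda\mu$, i.e.\ case~(iii).

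Necessity of (i)--(iii) comes out of the same analysis, because at each branch point the only alternative to the indicated parameter choice renders one of the inspected minors strictly negative. I expect the determinant computation in case~(iii) to be the main obstacle; once it is carried out, back-substitution into \eqref{eq:psi_Q} and \eqref{ineq} yields explicit thermodynamically consistent expressions for $\psi$ and $\sigma$, exactly in the spirit of the formulas displayed after the proof of Proposition~\ref{qprop}.
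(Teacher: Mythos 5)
Your strategy is the paper's: take the quadratic ansatz \eqref{eq:psi_Q}, reduce thermodynamic consistency to positive semidefiniteness of $A$, and force the negatively-signed perfect-square minors to vanish by spending the free coefficients $\alpha_1,\gamma_1,\alpha_2$. Your treatment of the branches $\tau\nu=0$ and $\tau\nu\neq0,\ \mu=0$ is complete and agrees with the paper's Cases (1)--(3), including the observation that $\tau\nu=\mu=0$ is impossible because $2A_{13}=-1/\theta$ cannot be cancelled; the values $\gamma_1=\lambda/(\mu\theta)$, $\alpha_2=0$ in case (i) and $\alpha_2=\lambda^2/(\tau\nu\theta)$, $\gamma_1=\lambda/(\nu\theta)$ in case (ii) are exactly the paper's. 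The only organizational difference is that the paper opens with the minor $d_1=A_{22}A_{33}-A_{23}^2=-\tfrac{1}{4\lambda^2}\bigl(\tfrac{\nu\tau^2-\mu\lambda}{\lambda}\alpha_2-\nu\tau\gamma_1\bigr)^2$ and branches on how to annihilate it, whereas you branch on the parameters first; the two routes meet at the same two linear relations on $(\gamma_1,\alpha_2)$.

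The gap is that case (iii), which is the substantive content of the proposition, is announced rather than proved: you assert that after imposing $A_{13}=0$ the remaining nonnegativity conditions ``collapse to $\mu>0$ and $\nu\tau^2\ge\lambda\mu$'' without carrying out the computation, and you explicitly defer the determinant calculation. Two points need repair. First, the remark that the signs of $A_{11}$ and $A_{33}$ ``can be arranged freely'' is not right: once you impose the two constraints $A_{13}=0$ and $d_1=0$ (the latter is mandatory since $d_1$ is a nonpositive perfect square), $\gamma_1$ and $\alpha_2$ are pinned to $\gamma_1=\tfrac{\nu\tau^2-\mu\lambda}{\nu\tau\lambda}\alpha_2$ and $\alpha_2=\tfrac{\nu\tau\lambda^2}{\theta[\nu^2\tau^2+\mu(\nu\tau^2-\mu\lambda)]}$, and the diagonal signs become precisely the conditions to be verified, namely $A_{11},A_{22},A_{33}$ all carry the common factor $1/\bigl(\theta[\nu^2\tau^2+\mu(\nu\tau^2-\mu\lambda)]\bigr)$ with numerators $\nu\tau^2-\mu\lambda$, $\mu\lambda^2$, $\mu\nu^2\tau^2$ respectively. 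Second, the sufficiency of the diagonal check rests on the identity $\det A=A_{13}^2A_{22}$ (valid once $A_{12}=0$ and $d_1=0$), which is what lets you conclude after setting $A_{13}=0$; this step, and the verification that the common denominator is forced to be positive, must be written out for the ``if and only if'' claim to stand. The asserted conclusion is correct --- the paper's computation confirms it --- but as written your case (iii) is a plan, not a proof.
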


\begin{proof}
To satisfy inequality \eqref{ineq} for all values of ${\bq}_{\sR}$, $\dot{\bq}_{\sR}$ and ${\nablaR \theta}$ we have to prove that the symmetric matrix $A=\{A_{ij}\}_{i,j=1,2,3}$  is positive semidefinite. First we consider the 2-by-2 principal minor
\[\begin{split}d_1:=\det
\begin{pmatrix}
A_{22}    &   A_{23} \\
  A_{23}    &  A_{33}
\end{pmatrix}
&=A_{22}A_{33}-A^2_{23}=\Big(\frac\tau{\lambda}\alpha_2-\gamma_1\Big) \frac{\mu}{\lambda}\gamma_3-\frac1{4}\Big(\frac\tau\lambda\gamma_3-\gamma_2+ \frac{\mu}\lambda\alpha_2\Big)^2\\
&=-\frac1{4\lambda^2}\Big( \frac{\nu\tau^2-\mu\lambda}{\lambda}\alpha_2-{\nu\tau}\gamma_1\Big)^2
\end{split}\]
where we applied \eqref{eq:eq}$_{1,2}$. If the quantity in brackets does not vanish this minor takes a 
negative value; therefore, we are forced to impose $d_1=0$. This can be achieved in different ways depending on the parameter values: 
\begin{enumerate}
  \item when $\tau\nu = 0$ and $\mu = 0$, whatever the values of $\gamma_1$ and $\alpha_2$ are.
  \item when $\tau\nu = 0$ and $\mu \neq 0$,  by setting $\alpha_2=0$.
  \item when $\nu\tau\neq0$ and $\mu=0$ by setting 
  \beq\gamma_1=\frac{\tau}{\lambda}\alpha_2.\label{eq:alpha_2_0}\eeq
  \item when $\nu\tau\neq0$ and  $\mu\neq0$ by setting 
  \beq\gamma_1=\frac{\nu\tau^2-\mu\lambda}{\nu\tau\lambda}\alpha_2.\label{eq:alpha_2}\eeq
\end{enumerate}

\smallskip
{\it Case} (1). If $\mu = 0$ and either $\tau=0$ or $\nu= 0$ then  from \eqref{eq:eq} it follows $\gamma_2=\gamma_3=\alpha_3=0$ and 
\[ 
A_{11} =\frac1\lambda\gamma_1,\quad A_{33} = 0 ,\quad A_{13} =-\frac1{2\theta},\quad A_{23} = 0.
\]
Hence, in any case $d_2:=A_{11}A_{33}-A^2_{13}=-1/4\theta^2<0$ and $A$ cannot be positive semidefinite.

\smallskip
{\it Case} (2). If $\mu \neq 0$ and either $\tau=0$ or $\nu= 0$ then  from \eqref{eq:eq} it follows 
$\gamma_2=\gamma_3=\alpha_3=0$. By setting $\alpha_2=0$ we get $d_1=0$ and
\[ \begin{split}
A_{11} &=\frac1\lambda\gamma_1,\quad A_{22} =-\gamma_1,\quad A_{33} = 0 ,\\
2A_{12}&=\frac\tau\lambda\gamma_1-\alpha_1,\quad 2A_{13} =-\frac1\theta+\frac{\mu}\lambda\gamma_1,\quad 2A_{23} = 0.
\end{split}\]
Since 
\[d_2=-\frac14\Big(-\frac1\theta+\frac{\mu}\lambda\gamma_1\Big)^2\]
we are forced to set $\gamma_1=\frac\lambda{\mu\theta}$. Accordingly, $d_2=0$ and
\[A_{11} =\frac1{\mu\theta},\qquad A_{22} =-\frac\lambda{\mu\theta},\qquad 2A_{12}=\frac\tau{\mu\theta}-\alpha_1,\]
so that the diagonal elements of $A$ are non negative only if $\mu>0$ and $\lambda<0$. If this is the case,
\[d_3:=A_{11}A_{22}-A^2_{12}=\frac{|\lambda|}{\mu^2\theta^2}-\frac14\Big(\frac\tau{\mu\theta}-\alpha_1\Big)^2\ge0\]
is ensured to be positive by choosing, for instance, $\alpha_1=\tau/{\mu\theta}$ so that $A_{12}=0$. Finally, 
\[\det A=d_1A_{11} -d_2A_{22} +d_3A_{33}=0\]
and we conclude that $A$ is positive semidefinite provided that i) occurs.

\smallskip
{\it Case} (3). Let $\nu\tau\neq0$ and $\mu=0$. After replacing \eqref{eq:alpha_2_0} into \eqref{eq:eq} we obtain
\[\gamma_2=\frac{\nu\tau^2}{\lambda^2}\alpha_2,\qquad \gamma_3 =\frac{\tau\nu}\lambda\alpha_2,\qquad \alpha_3 =\frac{\tau^2\nu^2}{\lambda^2}\alpha_2,\]
and then $A_{22} =A_{33} = A_{23} = 0$
\[ \begin{split}
A_{11} =\frac{\tau}{\lambda^2}\alpha_2,\qquad 
2A_{12}=\frac{\tau^2+\lambda}{\lambda^2}\alpha_2-\alpha_1,\qquad
 2A_{13} =-\frac1\theta+
\frac{\nu\tau}{\lambda^2}\alpha_2.
\end{split}\]
By letting
\[\alpha_1=\frac{\tau^2+\lambda}{\lambda^2}\alpha_2, \qquad  \alpha_2=\frac{\lambda^2}{\nu\tau\theta}\]
we get $d_2=d_3=0$ and $A$ turns out to be positive semidefinite if and only if $A_{11} = 1/\nu\theta\ge0$, that is when  ii) occurs.

\smallskip
{\it Case} (4). Let $\nu\tau\neq0$, $\mu\neq0$. After replacing \eqref{eq:alpha_2} into \eqref{eq:eq} we obtain
\[\gamma_2=\frac{\nu\tau^2-\mu\lambda}{\lambda^2}\alpha_2,\qquad \gamma_3 =\frac{\tau\nu}\lambda\alpha_2,\qquad \alpha_3 =\frac{\tau^2\nu^2}{\lambda^2}\alpha_2,\]
and then
\[ \begin{split}
A_{11} &=\frac{\nu\tau^2-\mu\lambda}{\nu\tau\lambda^2}\alpha_2,\quad A_{22} =\frac{\mu}{\nu\tau}\alpha_2,\quad A_{33} = \frac{\mu\nu\tau}{\lambda^2}\alpha_2 ,\quad 
2A_{12}=\frac{\nu\tau^2+(\nu-\mu)\lambda}{\nu\lambda^2}\alpha_2-\alpha_1,\\
 2A_{13} &=-\frac1\theta+
\frac{\nu^2\tau^2+\mu(\nu\tau^2-\mu\lambda)}{\nu\tau\lambda^2}\alpha_2,\quad 2A_{23} = \frac{2\mu}\lambda\alpha_2.
\end{split}\]
If we assume $A_{12} =0$ by letting
\[
\alpha_1=\frac{\nu\tau^2+(\nu-\mu)\lambda}{\nu\lambda^2}\alpha_2
\]
then $d_3=A_{11}A_{22}$ and since $d_1=0$, $d_2=A_{11}A_{33}-A_{13}^2$ it follows 
$$\det A=d_3A_{33}-d_2A_{22}=(A_{11}A_{33}-d_2)A_{22}=A_{13}^2A_{22}.$$
If we choose 
\[\alpha_2=\frac{\nu\tau\lambda^2}{\theta[\nu^2\tau^2+\mu(\nu\tau^2-\mu\lambda)]}\]
then $A_{13}=0$ and $A$ is positive semidefinite if and only if its diagonal terms 
\[A_{11} =\frac{\nu\tau^2-\mu\lambda}{\theta[\nu^2\tau^2+\mu(\nu\tau^2-\mu\lambda)]},\quad 
A_{22} =\frac{\mu\lambda^2}{\theta[\nu^2\tau^2+\mu(\nu\tau^2-\mu\lambda)]},\quad 
A_{33} =\frac{\mu\nu^2\tau^2}{\theta[\nu^2\tau^2+\mu(\nu\tau^2-\mu\lambda)]}\]
 take nonnegative values.
This is achieved by assuming conditions iii) on the material parameters.
\end{proof}

\subsection*{Joseph-Preziosi temperature equation}
Let us briefly discuss the temperature equation corresponding to Burgers-like model. For simplicity we restrict the discussion to the rigid and  isotropic case with all the parameters constant. Also, we assume there is no external heat supply. By combining the energy equation \eqref{eq:energy}
with \eqref{eq:BSE_iso} we get
\beq
\lambda \dddot\theta + \tau \ddot\theta+\dot\theta=\frac{1}{\rho_\sR c_v}\left( \mu\nablaR^2 \theta+\tau\nu\nablaR^2 \dot\theta\right).
\label{JP_temp_eq}\eeq
This third order equation looks like a generalization of the MGT temperature equation. Really, an equation of this form (see \cite[eqn.(5.7)]{JP}) had already been obtained by Joseph and Preziosi starting from the linearization of heat conduction with memory according to the Gurtin-Pipkin model \cite{GP}. 

Letting $\theta(\bx, t)= T(t)Y(\bX)$, we obtain the temperature equation
\[
\left(\lambda \dddot T+\tau\ddot T+\dot T\right)Y=\frac{1}{\rho_\sR c_v}\left(\mu T+\tau \nu\dot{T}\right)\nabla_\sR^2 Y,
\]
and then the equation for $T(t)$ reads
\begin{equation}\label{T}
\lambda \dddot T + \tau\ddot T+\dot T=-\tilde\Lambda(\mu T+\tau \nu\dot{T}),\qquad \tilde\Lambda=\frac{\Lambda}{\rho_\sR c_v},
\end{equation}
where  $\tilde\Lambda$ is the set of all eigenvalues $\tilde\Lambda_n$, $n\in\N$, that we remember must be non-negative and unbounded in order to comply with the given boundary conditions.
If $T(t)=e^{wt}$, then $w$ solves the cubic equation 
\begin{equation}\label{wpo}
\lambda w^3+\tau w^2+(1+\tilde\Lambda_n\tau\nu)w+\tilde\Lambda_n\mu =0, \qquad n\in\N.
\end{equation}
To avoid blow up of solutions at large time, we seek only for decaying or oscillating solutions to equation (\ref{T}). Accordingly, we look at the conditions giving all roots with negative real part. By applying the Routh-Hurwitz criterion, for any $n\in\N$ we get
\begin{equation}\label{RHu}
\lambda >0, \;\tau >0, \ 1+\tilde\Lambda_n\tau\nu>0,\ \tilde\Lambda_n\mu>0,\ \tau+\tilde\Lambda_n(\tau^2\nu-\lambda\mu)>0.
\end{equation}
or 
\begin{equation}\label{RHu1}
\lambda <0, \ \tau <0, \ 1+\tilde\Lambda_n\tau\nu<0,\ \tilde\Lambda_n\mu<0,\ \tau+\tilde\Lambda_n(\tau^2\nu-\lambda\mu)<0.
\end{equation}
Let us look at the compatibility of (\ref{RHu}) or (\ref{RHu1}) with the conditions about the thermodynamical consistency stated in Proposition \ref{Burcon}. 

First consider the compatibility of condition i),  $\mu>0$, $\lambda<0$ and $\tau\nu=0$, with \eqref{RHu1}. In this case  the polynomial (\ref{wpo}) takes the form
\[
\lambda w^3+\tau w^2+w+\tilde\Lambda_n\mu.
\]
and \eqref{RHu1} is violated in that the coefficients of $w^3$ and $w$ have different sign. We conclude that condition i) is thermodynamically consistent but gives an unphysical  model from the point of view of the behavior of the solution, since it diverges for large times. 

Then assume condition ii), $\tau\neq0$, $\mu=0$, $\nu>0$. In this case a root of (\ref{wpo}) is equal to zero, the other two roots being solution of 
\[
\lambda w^2+\tau w+(1+\tilde\Lambda_n\tau\nu) =0.
\]
The coefficients of the above polynomial must all have the same sign for both of its roots to have a negative real part. This is achieved by assuming either $\lambda >0$, $\tau >0$ or 
\beq\lambda <0,\quad \tau <0, \quad1-\tilde\Lambda_n|\tau|\nu<0.\label{Bur_strange}\eeq 
The last inequality must be satisfied for all the eigenvalues $\Lambda_n$. Taking into account that  $\Lambda _n$ is an ascending sequence and $\nu >0$, this is only true if
\[
\Lambda_1>0, \qquad\nu>\frac{\rho_\sR c_v}{\Lambda_1 |\tau|}.
\]
Such a relation is very peculiar since the actual value of $\Lambda_1$ depends on the given boundary conditions and the size of the domain. 

Finally, let us consider the case iii), i.e $\nu\tau\neq0$, $\mu>0$ and $\nu\tau^2\ge\lambda\mu$. In this case inequalities (\ref{RHu}) are compatible with iii)  and the corresponding physical model is thermodinamically consistent with solutions decaying as $t\to\infty$. On the contrary, conditions (\ref{RHu1}) conflict with iii) and must be rejected. Indeed, if $\tau<0$, then inequalities $\tau^2\nu-\lambda\mu>0$ of iii) and $\tau+\tilde\Lambda_n(\tau^2\nu-\lambda\mu)<0$ of (\ref{RHu1}) would give for any $n\in\N$
$$\tau<-\tilde\Lambda_n(\tau^2\nu-\lambda\mu).$$
which leads to an unbounded value of $\tau$ since $\Lambda_n$ belongs to an unbounded sequence. 

Summarizing: the Burgers-like model are thermodinamically and dinamically consistent (i.e. the entropy production is nonnegative and all solutions of the temperature equation are bounded and eventually approach equilibrium values) if and only if the following conditions are satisfied:
\[
\mu \geq 0,\; \nu>0,\; \lambda>0, \; \tau>0, \; \nu\tau^2\ge\lambda\mu.
\]
We excluded the case \eqref{Bur_strange} due to its peculiarity. This exclusion is also suggested by the observation that the positivity of $\lambda$ and $\tau$ could be also inferred from the physical assumptions we made to construct the Burgers-like models. Indeed, by \eqref{Burg_parameters} $\lambda$ and $\tau$ are respectively the product and the sum of two positive relaxation times.

\subsection{Local heat conduction models of rate-type in spatial description}
The models described in the previous sections can be applied also to fluid heat conductors by rewriting their constitutive rate-type equations into the spatial description.
We illustrate the procedure by considering the Burgers-like model \eqref{JCL}
\[\lambda\ddot \bq_\sR+\tau\dot \bq_\sR+\bq_\sR=-\bmu\nablaR \theta- \tau\bnu\nablaR \dot\theta.\]
To this end we remember that $\nablaR \dot\theta=(\nablaR \theta)^\cdot$ and
$$\bq_\sR = J \bF^{-1} \bq,\qquad \nablaR=\bF^{T}\nabla.$$
Therefore it follows
\[\lambda (J \bF^{-1} \bq)^{\cdot\cdot}+\tau(J \bF^{-1} \bq)^{\cdot}+J \bF^{-1} \bq=-\bmu\bF^{T}\nabla \theta- \tau\bnu(\bF^{T}\nabla \theta)^\cdot.\]
After introducing the  convective time derivative (also referred to as {\it Truesdell vector rate}  \cite[p.59]{GM_book}),
$$\quadro{\bq}\,=\partial_t\bq+\nabla\times(\bq\times\bv)+(\nabla\cdot\bq)\bv=\dot\bq-\bL\bq+(\nabla\cdot\bv)\bq ,$$
and the Cotter-Rivlin vector rate,
$$(\nabla\theta)^\strianup= (\nabla\theta)^\cdot+\bL^T\nabla\theta,$$
a straightforward calculation provides
\[(J \bF^{-1} \bq)^{\cdot}=J \bF^{-1}\quadro\bq, \qquad (J \bF^{-1} \bq)^{\cdot\cdot}=J \bF^{-1}\quadri\bq, \qquad (\bF^{T}\nabla \theta)^\cdot=\bF^{T}(\nabla\theta)^\strianup.\]
Accordingly, we can write
\[J \bF^{-1} [\lambda\quadri\bq+\tau\quadro\bq+\bq]=-\bmu\bF^{T}\nabla \theta- \tau\bnu\bF^{T}(\nabla\theta)^\strianup,\]
from which it follows
\beq\lambda\quadri\bq+\tau\quadro\bq+\bq=-\bmu_s\nabla \theta- \tau \bnu_s(\nabla\theta)^\strianup,\label{Burgers_spatial}\eeq
where 
$$\bmu_s:=J^{-1} \bF\bmu\bF^{T}, \qquad\bnu_s:=J^{-1} \bF\bnu\bF^{T}$$
are the viscosity tensors in the spatial description. In particular,  when $\bmu$ and $\bnu$ are isotropic, namely $\bmu=\mu\bone$ and $\bnu=\nu\bone$, then
$$\bmu_s=J ^{-1}\mu\bB, \qquad\bnu_s=J^{-1}\nu\bB, \qquad \bB:=\bF\bF^{T}.$$
Neglecting that the viscosity tensors depend on $\bF$, \eqref{Burgers_spatial} represents a linear rate-type constitutive equation for the heat flux vector in the spatial description. Note that all time derivatives involved there are objective.

\section{Weakly nonlocal theories of heat conduction}
Nonlocal effects in generalized equations for heat conduction are of particular interest in systems where the mean free path is comparable to (or bigger than) the characteristic volume size, i.e., where the Knudsen number is equal or greater than one. This may occur when considering nanosystems (the  size of the system is comparable to the mean free path) or studying propagation of phonons at low temperatures or in rarefied systems (the mean free path is comparable to the size of the system).

We focus here on {\it non-local} theories by assuming  $\sigma=\zeta+\nablaR\cdot\bk_\sR$ subject to $\zeta\ge0$ and $\bk_\sR\cdot\bn_\sR\vert_{\partial\cR}=0$, in agreement with the nonlocal statement of the Second Law \eqref{strong_II_law}. 
As in the previous sections, we neglect the dependence on ${\bE}$ and $ {\bT}_{\sR\sR}$.  Hence \eqref{basic_ent_ineq} reduces to 
\beq-{\rho_\sR}(\dot{\psi} + \eta \dot{\theta}) 
- \frac 1 {\theta}\, \bq_\sR \cdot \nablaR \theta+\theta\nablaR\cdot\bk_\sR=\rho_\sR\theta\zeta  \ge 0.\label{eq:CD_nl} \eeq

Weakly nonlocal models  are developed here on the basis of four elements,
\begin{itemize}
\item[-] the set $\Sigma_\sR$ of admissible variables;
\item[-] the free energy density function $\psi=\psi(\Sigma_\sR)$;
\item[-] the internal entropy production function $\zeta=\zeta(\Sigma_\sR)$
\item[-] the extra entropy flux function $\bk=\bk(\Sigma_\sR)$
\end{itemize}

\subsection{Basic nonlocal models}
To describe weakly nonlocal effects in heat conduction,  we add $\nablaR\bq_\sR$ to the set of variables.
Hence,
\[ \Sigma_\sR := (\theta, \bq_\sR, \nablaR \theta, \nablaR\bq_\sR).\]
is  the basic set of independent variables and  $\psi, \eta$, $\sigma$ and $\bk$ are assumed to 
depend on $\Sigma_\sR$. In addition, we assume $\eta$ is continuous while $\psi$ is continuously differentiable.
Upon evaluation of $\dot{\psi}$ and substitution in (\ref{eq:CD_nl}) we obtain
\[ \begin{split} \rho_{\sR}(\partial_\theta \psi + \eta) \dot{\theta}
+ \rho_{\sR} \partial_{ \bq_\sR}\psi \cdot { \dot\bq_\sR}  + \rho_{\sR} \partial_{\nablaR \bq} \psi \cdot \nablaR \dot\bq_\sR& \\
 + \rho_{\sR} \partial_{\nablaR \theta} \psi \cdot \nablaR \dot{\theta}+\frac 1 \theta \bq_{\sR} \cdot \nablaR \theta+\theta\nablaR\cdot\bk_\sR&=-\rho_\sR\theta\zeta\le0.\end{split} \] 
The linearity and arbitrariness of $\dot{\theta}$, $\dot\bq_\sR$, $\nablaR \dot\bq_\sR$ and $\nablaR \dot{\theta}$ imply that 
\[ \psi = \psi(\theta), \qquad \eta = - \partial_\theta \psi, \]
and (\ref{eq:CD_nl}) reduces to 
\beq  \frac {\bq_{\sR}} {\theta^2} \cdot \nablaR \theta+\nablaR\cdot\bk_\sR=-\rho_\sR\zeta \le 0. \label{eq:ine_nl}\eeq 

For definiteness,  we exhibit a nonlocal perturbation of the Fourier law which is consistent with the Second Law in the form \eqref{eq:ine_nl}. According to \cite{GK2} we let the following macroscopic constitutive equation,
\beq
\bq_\sR=-\kappa \nablaR\theta+\lambda^2[\nabla_\sR^2\bq_\sR+2\nablaR(\nablaR\cdot\bq_\sR)],
\label{eq:nlFourier}\eeq
where $\kappa(\theta)$ is the usual bulk thermal conductivity and $\lambda^2(\theta)$ is a (possibly small) positive function. To prove its thermodynamic consistency  we  let $\psi=\psi(\theta)$ and
\beq
\rho_\sR\zeta =\frac{\ell^2}{\lambda^2} |\bq_\sR|^2+\ell^2\big[|\nablaR\bq_\sR|^2+2|\nablaR\cdot\bq_\sR|^2\big], \quad \bk_\sR=-\ell^2\big[\bq_\sR\nablaR\bq_\sR+2(\nablaR\cdot\bq_\sR)\bq_\sR\big].
\label{eq:nl_F_diss} \eeq
where $\ell$ is an undetermined constant parameter. After computing
\[
\nablaR\cdot\bk_\sR=-\ell^2\Big[|\nablaR\bq_\sR|^2+2|\nablaR\cdot\bq_\sR|^2+[\nablaR\cdot\nablaR\bq_\sR+2\nablaR(\nablaR\cdot\bq_\sR)]\cdot\bq_\sR\Big]
\]
and noticing that $\nablaR\cdot\nablaR=\nabla_\sR^2$, from  \eqref{eq:ine_nl} and \eqref{eq:nl_F_diss} it follows
\beq \frac {\bq_{\sR}} {\theta^2} \cdot \nablaR \theta=-\frac{\ell^2}{\lambda^2} |\bq_\sR|^2+{\ell^2}\big[\nabla_\sR^2\bq_\sR+2\nablaR(\nablaR\cdot\bq_\sR)\big]\cdot\bq_\sR.\label{eq:intemedia}\eeq
So, after taking $\nablaR\theta$ from equation \eqref{eq:nlFourier} and substituting it into \eqref{eq:intemedia} we obtain
 \[ \Big[\frac {1} {\kappa\theta^2}-\frac{\ell^2}{\lambda^2}\Big] \bq_{\sR}\cdot \Big[-\bq_\sR+\lambda^2[\nabla_\sR^2\bq_\sR+2\nablaR(\nablaR\cdot\bq_\sR)]\Big]=0.\]
This means that \eqref{eq:nlFourier} is thermodynamically consistent provided that $\kappa$ and $\lambda$ are related by
\[\kappa(\theta)=\frac {\lambda^2(\theta)} {\ell^2\theta^2}.\]
Since $\ell$ is an arbitrary constant parameter, it is then sufficient that $\kappa$ is positive and proportional to $\lambda^2/\theta^2$.

\subsection{Rate-type nonlocal models}
In this case the set of variables is allowed to contain the time derivative of the heat flux vector,
\[ \Sigma_\sR := (\theta,  \bq_\sR, \nablaR \theta, \dot{\bq}_\sR,\nablaR\bq_\sR)\]
and constitutive functions for $\psi, \bk$ and $\zeta$ depend on $\Sigma_\sR$.
Upon evaluation of $\dot{\psi}$ and substitution in (\ref{eq:CD}) we obtain
\[ \begin{split} \rho_{\sR}(\partial_\theta \psi + \eta) \dot{\theta}
+ \rho_{\sR} \partial_{ \bq_\sR}\psi \cdot { \dot\bq_\sR}+ \rho_{\sR} \partial_{\nablaR \theta} \psi \cdot \nablaR \dot{\theta} + \rho_{\sR} \partial_{ \dot\bq_\sR}\psi \cdot { \ddot\bq_\sR}   & \\
+ \rho_{\sR} \partial_{\nablaR \bq} \psi \cdot \nablaR \dot\bq_\sR
 +\frac 1 \theta \bq_{\sR} \cdot \nablaR \theta+\nabla\cdot\bk_\sR&=-\rho_\sR\zeta.\end{split} \] 
The linearity and arbitrariness of $\dot{\theta}$, $\ddot{\bq}_\sR$, $\nablaR \dot{\theta}$ and $\nablaR \dot\bq_\sR$ imply that 
\[ \psi = \psi(\theta,\bq_\sR), \qquad \eta=-\partial_\theta \psi, \]
so that the Clausius-Duham inequality reduces to 
\beq   \rho_{\sR}\partial_{ \bq_\sR}\psi \cdot { \dot\bq_\sR}
+ \frac {\bq_{\sR}} \theta \cdot \nablaR \theta+\nabla\cdot\bk_\sR=-\rho_\sR\zeta \le 0. \label{eq:ine0_nl}\eeq

\subsubsection{Guyer-Krumhansl conductors}
Guyer and Krumhansl \cite{GK} studied the heat wave propagation in dielectric crystals at low temperature. They observed that in the regime of low temperature the heat flux $\bq$ is proportional to the momentum flux of the phonon gas and then found the following macroscopic equation governing its evolution 
\beq
\tau\dot \bq_\sR+\bq_\sR+\kappa \nablaR\theta=\lambda^2[\nabla_\sR^2\bq_\sR+2\nablaR(\nablaR\cdot\bq_\sR)],
\label{eq:GrK}\eeq
where $\tau$ is the relaxation time for resistive phonon scattering, $\lambda^2$ is the mean-free path of phonons and $\kappa$ is the usual bulk thermal conductivity \cite{AGr}. Both $\lambda$ and $\kappa$ possibly depend on $\theta$.
Hereafter we prove the consistency of this linear equation with the Second Law in the nonlocal form \eqref{eq:ine0_nl}.

Assuming $ \partial_{\bq_{\sR}} \psi \neq \bzero$  equation  \eqref{eq:ine0_nl} can be written in the form
 \[
\frac{\partial_{ \bq_\sR}\psi}{ |\partial_{\bq_\sR} \psi|} \cdot { \dot\bq_\sR}
=- \frac{\theta }{ \rho_{_R}|\partial_{\bq_\sR} \psi|}\Big(\frac {\bq_{\sR}} {\theta^2} \cdot \nablaR \theta+\rho_\sR\zeta+\nablaR\cdot\bk_\sR\Big),
\]
and applying \eqref{eq:BNG} with $\bN=\partial_{\bq_\sR} \psi/|\partial_{\bq_\sR} \psi|$, $\bZ={{\dot\bq}_\sR}$ 
we obtain
\beq
{{\dot\bq}_\sR}=\Big( {\bq_{\sR}} \cdot \nablaR \frac1 \theta-\rho_\sR\zeta-\nablaR\cdot\bk_\sR\Big) \frac{\theta\bN }{ \rho_{_R}|\partial_{\bq_\sR} \psi|} + (\bone - \bN\otimes \bN)\bG ,
\label{Hypo_mat_GK}\eeq
where $\bG$ is an arbitrary vector-valued function dependent on $(\theta,\bq_\sR,\nablaR \theta)$. Now we let
\beq
\rho_{\sR}\psi=\rho_{\sR}\psi_0(\theta)+\frac{\tau\theta}{2\varkappa(\theta)}|\bq_\sR|^2, \quad 
\rho_\sR\zeta =\frac1{\varkappa(\theta)}|\bq_\sR|^2+\ell^2|\nablaR\bq_\sR|^2+2\ell^2|\nablaR\cdot\bq_\sR|^2, 
\label{eq:GK_energy} \eeq
\[
\bk_\sR=-\ell^2[\bq_\sR\nablaR\bq_\sR+2(\nablaR\cdot\bq_\sR)\bq_\sR].
\]
where $\ell$ is a constant parameter and $\varkappa$ a positive function.
Noting that
\[
[\nabla_\sR^2\bq_\sR+2\nablaR(\nablaR\cdot\bq_\sR)]\cdot\bq_\sR=\nablaR\cdot[\bq_\sR\nablaR\bq+2(\nablaR\cdot\bq_\sR)\bq_\sR]-
|\nablaR\bq_\sR|^2-2|\nablaR\cdot\bq_\sR|^2,
\]
the representation formula  \eqref{Hypo_mat_GK} gives
\[
{{\dot\bq}_\sR}= \bG+\frac1\tau\bN\otimes \bN\Big(\varkappa(\theta)\nablaR\frac 1\theta-\bq_\sR+\ell^2\varkappa(\theta)[\nabla_\sR^2\bq+2\nablaR(\nablaR\cdot\bq)]-\tau\bG\Big),
\]
where $\bN=\bq_\sR/|\bq_\sR|$. Finally, letting 
$$\tau\bG=\varkappa(\theta)\Big(\nablaR \frac1\theta-\bq_\sR+\ell^2[\nabla_\sR^2\bq+2\nablaR(\nablaR\cdot\bq)]\Big)$$  we obtain \eqref{eq:GrK} provided that 
\beq\lambda^2(\theta)=\ell^2\varkappa(\theta),\qquad\kappa(\theta)=\varkappa(\theta)/\theta^2.\label{parameters_dep}\eeq
These conditions reveal that temperature-dependent material parameters of the GK model are functionally connected. Notably, this can also be considered a non-trivial consequence of the {\it Onsagerian relation} (see, e.g., \cite[eqn.7]{RKFAJ}).

When $\tau$ is negligible then equation \eqref{eq:GrK} reduces to \eqref{eq:nlFourier}.
If this is the case, the free energy $\psi$ is independent of $\bq_\sR$. However, from the comparison of \eqref{eq:nl_F_diss} and \eqref{eq:GK_energy}  we observe that $\zeta$ and $\bk_\sR$ have the same expression, whether $\tau=0$ or $\tau>0$.  

\begin{proposition}\label{GKcond}
Since $\ell$ and $\varkappa(\theta)$ are undetermined, \eqref{eq:GrK} is thermodynamically consistent with the nonlocal statement \eqref{eq:ine0_nl} of the Second Law provided that $\kappa(\theta)$ is proportional to $\lambda^2(\theta)/\theta^2$ and the nonlinear boundary condition
\beq\bk_\sR\cdot\bn_\sR\Big\vert_{\partial\cR}=\bq_\sR\cdot\frac{\partial\bq_\sR}{\partial n_\sR}\Big\vert_{\partial\cR}+2(\nablaR\cdot\bq_\sR)\bq_\sR\cdot\bn_\sR\Big\vert_{\partial\cR}=0\label{eq:bc_GK}\eeq
is satisfied. 
\end{proposition}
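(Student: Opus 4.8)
The plan is to prove sufficiency constructively: take the triple $(\psi,\zeta,\bk_\sR)$ displayed in \eqref{eq:GK_energy} as an ansatz and show that, with it, the reduced Clausius--Duhem inequality \eqref{eq:ine0_nl} is satisfied and, moreover, is equivalent to \eqref{eq:GrK} supplemented by the parameter relation $\kappa\propto\lambda^2/\theta^2$ and the boundary restriction \eqref{eq:bc_GK}. First I would compute, for the quadratic free energy in \eqref{eq:GK_energy}, that $\partial_{\bq_\sR}\psi=(\tau\theta/\varkappa)\bq_\sR$, so $\partial_{\bq_\sR}\psi\neq\bzero$ precisely when $\bq_\sR\neq\bzero$ and the relevant unit direction is $\bN=\bq_\sR/|\bq_\sR|$. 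Since the preceding reduction has already fixed $\psi=\psi(\theta,\bq_\sR)$ and $\eta=-\partial_\theta\psi$, the operative inequality is \eqref{eq:ine0_nl}; isolating the component of $\dot\bq_\sR$ along $\bN$ by the Representation Lemma \eqref{eq:BNG}, with $\bZ=\dot\bq_\sR$ and arbitrary orthogonal data $\bG$, yields \eqref{Hypo_mat_GK}.

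The second step is the Leibniz identity
\[
[\nabla_\sR^2\bq_\sR+2\nablaR(\nablaR\cdot\bq_\sR)]\cdot\bq_\sR
=\nablaR\cdot[\bq_\sR\nablaR\bq_\sR+2(\nablaR\cdot\bq_\sR)\bq_\sR]-|\nablaR\bq_\sR|^2-2|\nablaR\cdot\bq_\sR|^2,
\]
which lets me rewrite $\nablaR\cdot\bk_\sR$ for the $\bk_\sR$ of \eqref{eq:GK_energy} so that the gradient-quadratic terms in $\nablaR\cdot\bk_\sR$ exactly cancel those appearing in $\rho_\sR\zeta$. After this cancellation the bracket multiplying $\bN\otimes\bN$ in \eqref{Hypo_mat_GK} collapses to $\varkappa(\theta)\big(\nablaR(1/\theta)-\bq_\sR+\ell^2[\nabla_\sR^2\bq_\sR+2\nablaR(\nablaR\cdot\bq_\sR)]\big)-\tau\bG$; choosing $\bG$ to annihilate this bracket forces $\dot\bq_\sR$ to equal the right-hand side of the prescribed evolution law, and comparing coefficients gives $\lambda^2(\theta)=\ell^2\varkappa(\theta)$ and $\kappa(\theta)=\varkappa(\theta)/\theta^2$, i.e.\ \eqref{parameters_dep}, so $\kappa$ is proportional to $\lambda^2/\theta^2$. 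As $\ell$ and $\varkappa$ are at our disposal, any such pair $\kappa,\lambda$ is realized.

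Finally I would verify the two clauses of the nonlocal Second Law \eqref{strong_II_law}. The interior clause $\zeta\ge0$ is immediate, since $\rho_\sR\zeta$ in \eqref{eq:GK_energy} is a sum of the nonnegative terms $|\bq_\sR|^2/\varkappa$, $\ell^2|\nablaR\bq_\sR|^2$ and $2\ell^2|\nablaR\cdot\bq_\sR|^2$, provided $\varkappa(\theta)>0$, which we may assume. The boundary clause requires $\bk_\sR\cdot\bn_\sR|_{\partial\cR}=0$; contracting the explicit $\bk_\sR$ with $\bn_\sR$, and writing $(\bq_\sR\nablaR\bq_\sR)\bn_\sR=\bq_\sR\cdot(\partial\bq_\sR/\partial n_\sR)$, reproduces exactly the left-hand side of \eqref{eq:bc_GK}, so the no-flow requirement is precisely that nonlinear condition on $\bq_\sR$ and its normal derivative. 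The creative content — the particular split of $\rho_\sR\zeta$ and $\bk_\sR$ — is already supplied by \eqref{eq:GK_energy}; given that, the only delicate point is the contraction bookkeeping in this last step (pairing each of the two tensorial pieces of $\bk_\sR$ with $\bn_\sR$ correctly), together with the conceptual observation that \eqref{eq:bc_GK} is a genuine nonlocal constraint rather than an identity, and so must be imposed as a restriction on admissible boundary data. This completes the verification and establishes the proposition.
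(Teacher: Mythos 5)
Your proposal is correct and follows essentially the same route as the paper: the paper's argument is precisely the constructive one you describe — take $(\psi,\zeta,\bk_\sR)$ from \eqref{eq:GK_energy}, apply the Representation Lemma with $\bN=\bq_\sR/|\bq_\sR|$ to reach \eqref{Hypo_mat_GK}, use the divergence identity to cancel the gradient-quadratic terms of $\rho_\sR\zeta$ against $\nablaR\cdot\bk_\sR$, fix $\bG$ to recover \eqref{eq:GrK} with $\lambda^2=\ell^2\varkappa$ and $\kappa=\varkappa/\theta^2$, and translate $\bk_\sR\cdot\bn_\sR|_{\partial\cR}=0$ into \eqref{eq:bc_GK}. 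The only cosmetic caveat is that the inequality is shown to be \emph{compatible} with \eqref{eq:GrK} under these choices (sufficiency), not literally equivalent to it, which is all the proposition asserts.
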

In the seminal paper \cite{GK} steady one-dimensional  solutions are derived in a cylinder assuming $\bq_\sR=\bzero$ on its boundary, a condition that certainly implies \eqref{eq:bc_GK}. Furthermore,  \eqref{eq:bc_GK} is perfectly consistent with boundary conditions scrutinized in \cite{SXAJ} in connection with steady solutions in nanowires. 
It is worth noting that \eqref{eq:bc_GK} can be ensured by the sufficient condition
\beq
\bq_\sR\cdot\frac{\partial\bq_\sR}{\partial n_\sR}\Big\vert_{\partial\cR}=0, \qquad \bq_\sR\cdot\bn_\sR\Big\vert_{\partial\cR}=0,
\label{suff_bc_GK}\eeq
This condition allows an axially symmetric profile of heat flux in a cylindrical pipe, a profile reminiscent of Hagen-Poiseuille flow for viscous Newtonian fluids.


\subsubsection{Nonlinear Guyer-Krumhansl conductors}
A simple nonlinear extension of the Guyer-Krumhansl model can be written as\footnote{\,When $\beta=0$, see for instance  \cite{CSJ_2009,CSJ_2010}.}
\beq
\tau\dot \bq_\sR+\bq_\sR+\kappa \nablaR\theta=\lambda^2[\nabla_\sR^2\bq_\sR+2\nablaR(\nablaR\cdot\bq_\sR)]+\mu\bq_\sR\nablaR\bq_\sR+\nu(\nablaR\cdot\bq_\sR)\bq_\sR.
\label{eq:GrK_nl}\eeq
This model illustrates relaxational and nonlocal effects of the heat flow in nanosystems  \cite{CSJ_2010bis}.
Borrowing the procedure of the previous subsection, we choose $\psi$ and $\zeta$ as in \eqref{eq:GK_energy} and we let
\[
\bk_\sR=-\ell^2[\bq_\sR\nablaR\bq_\sR+2(\nablaR\cdot\bq_\sR)\bq_\sR]-\delta|\bq_\sR|^2\bq_\sR.
\]
First we compute $\nablaR\cdot\bk_\sR$ by exploiting the identity 
\[
\nablaR\cdot\big[|\bq_\sR|^2\bq_\sR\big]=2\bq_\sR\nablaR\bq_\sR\cdot\bq_\sR+|\bq_\sR|^2\nablaR\cdot\bq_\sR,
\]
then we apply the representation formula \eqref{Hypo_mat_GK} as in the previous section. As a result we recover the constitutive equation  \eqref{eq:GrK_nl} provided that
\beq\lambda^2(\theta)=\ell^2\varkappa(\theta), \qquad \kappa(\theta)=\varkappa(\theta)/\theta^2, \qquad \mu(\theta)=2\delta\varkappa(\theta),\qquad \nu(\theta)=\delta\varkappa(\theta).\label{eq:consist}\eeq
As before, $\varkappa(\theta)>0$ ensures that the internal entropy supply is nonnegative, but thermodynamics does not impose any restrictions on the sign of $\delta$.
\begin{proposition}
Since $\ell$, $\nu$ and $\varkappa(\theta)$ are undetermined, \eqref{eq:GrK_nl} is thermodynamically consistent with the nonlocal statement of the Second Law provided that 
\begin{itemize}
  \item [i) ] $\kappa(\theta)$ is proportional to $\lambda^2(\theta)/\theta^2$;
  \item [ii) ] $\mu(\theta)=2\nu(\theta)$ is proportional to $\lambda^2(\theta)$;
  \item [iii) ] $\bq_\sR$ satisfies the nonlinear boundary conditions
\beq\bq_\sR\cdot\frac{\partial\bq_\sR}{\partial n_\sR}\Big\vert_{\partial\cR}+2(\nablaR\cdot\bq_\sR)\bq_\sR\cdot\bn_\sR\Big\vert_{\partial\cR}=0,
\qquad |\bq_\sR|^2\bq_\sR\cdot\bn_\sR\Big\vert_{\partial\cR}=0\label{eq:bc_GK_nl}\eeq
\end{itemize} 
\end{proposition}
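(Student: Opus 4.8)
The plan is to repeat verbatim the scheme used for the linear Guyer--Krumhansl conductor in the previous subsection, adding a single cubic term to the entropy extra flux $\bk_\sR$ so as to generate the two extra right-hand-side contributions $\mu\bq_\sR\nablaR\bq_\sR+\nu(\nablaR\cdot\bq_\sR)\bq_\sR$ of \eqref{eq:GrK_nl}.

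First I would keep the free energy and the internal entropy production exactly as in \eqref{eq:GK_energy},
\[\rho_{\sR}\psi=\rho_{\sR}\psi_0(\theta)+\frac{\tau\theta}{2\varkappa(\theta)}|\bq_\sR|^2,\qquad \rho_\sR\zeta =\frac1{\varkappa(\theta)}|\bq_\sR|^2+\ell^2|\nablaR\bq_\sR|^2+2\ell^2|\nablaR\cdot\bq_\sR|^2,\]
so that $\psi=\psi(\theta,\bq_\sR)$, $\zeta\ge0$ whenever $\varkappa(\theta)>0$, and $\partial_{\bq_\sR}\psi=(\tau\theta/\varkappa)\bq_\sR$, whence $\bN=\bq_\sR/|\bq_\sR|$ in the representation formula \eqref{Hypo_mat_GK}. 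Then I would take
\[\bk_\sR=-\ell^2[\bq_\sR\nablaR\bq_\sR+2(\nablaR\cdot\bq_\sR)\bq_\sR]-\delta|\bq_\sR|^2\bq_\sR,\]
with $\delta$ a free scalar parameter, and compute $\nablaR\cdot\bk_\sR$ using the divergence identity already exploited in the linear case together with $\nablaR\cdot[|\bq_\sR|^2\bq_\sR]=2\bq_\sR\nablaR\bq_\sR\cdot\bq_\sR+|\bq_\sR|^2\nablaR\cdot\bq_\sR$.

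Substituting $\zeta$, $\bk_\sR$ and $\partial_{\bq_\sR}\psi$ into the reduced inequality \eqref{eq:ine0_nl} recast in the representation form \eqref{Hypo_mat_GK}, the dissipative term $-|\bq_\sR|^2/\varkappa$, the Laplacian-type nonlocal term and the cubic term line up with the corresponding terms of \eqref{eq:GrK_nl} once the arbitrary vector $\bG$ is chosen as in the linear derivation, augmented by the $\delta$ contribution. Matching coefficients term by term then forces exactly \eqref{eq:consist}, namely $\lambda^2=\ell^2\varkappa$, $\kappa=\varkappa/\theta^2$, $\mu=2\delta\varkappa$, $\nu=\delta\varkappa$; eliminating the free constants $\ell,\delta$ and the free function $\varkappa$ gives i) ($\kappa$ proportional to $\lambda^2/\theta^2$) and ii) ($\mu=2\nu$ proportional to $\lambda^2$). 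Condition iii) is simply the explicit form of the no-flow requirement $\bk_\sR\cdot\bn_\sR\vert_{\partial\cR}=0$ of \eqref{strong_II_law}: inserting the chosen $\bk_\sR$ and separating its quadratic and cubic parts reproduces \eqref{eq:bc_GK_nl}.

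The only real difficulty is bookkeeping: making sure every term produced by $\nablaR\cdot\bk_\sR$ either cancels against a term of $\rho_\sR\zeta$ or reproduces one of the right-hand-side terms of \eqref{eq:GrK_nl} with the correct coefficient. There is no conceptual obstacle --- by construction $\sigma=\zeta+\nablaR\cdot\bk_\sR$ has the required structure, $\zeta\ge0$ is automatic from $\varkappa>0$, and the boundary condition in iii) is imposed rather than derived.
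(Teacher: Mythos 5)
Your proposal is correct and follows exactly the paper's own route: the same $\psi$ and $\zeta$ from \eqref{eq:GK_energy}, the same cubic augmentation $-\delta|\bq_\sR|^2\bq_\sR$ of the extra entropy flux, the same divergence identity, the representation formula \eqref{Hypo_mat_GK}, and the coefficient matching that yields \eqref{eq:consist}, with iii) read off as the no-flow condition \eqref{strong_II_law} for the chosen $\bk_\sR$. Nothing to add.
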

We can easily verify that \eqref{eq:bc_GK_nl}, as well as  \eqref{eq:bc_GK},  is ensured by the sufficient condition \eqref{suff_bc_GK}.

\begin{remark}\label{6_1}
Let
\[\bQ_\sR=\lambda^2[\nablaR\bq+2(\nablaR\cdot\bq_\sR)\bone]+\frac\mu 2|\bq|^2\bone+\nu \bq\otimes\bq\,,\]
and let all parameters take constant values. Then equation \eqref{eq:GrK_nl} can be rewritten as
\[
\tau\dot \bq_\sR+\bq_\sR+\kappa \nablaR\theta=\nablaR\cdot\bQ_\sR,
\]
that represents the evolution equation of the heat flux in the framework of extended irreversible thermodynamics \cite{JCVL,Van2015}. In particular, using \eqref{eq:consist} with $\varkappa(\theta)=\varkappa_0$, we obtain
\[\bQ_\sR=\varkappa_0\big(\ell^2[\nablaR\bq+2(\nablaR\cdot\bq_\sR)\bone]+\nu\big[|\bq|^2\bone+\bq\otimes\bq\big]\big).\]
Note that $\bQ_\sR$ is related to the extra entropy flux $\bk_\sR$ by the relation
$$\bQ_\sR=-\varkappa_0\,\partial_{\bq_\sR}\bk_\sR.$$
\end{remark}
\section{Conclusions}
This paper is devoted to develop a general constitutive scheme within continuum thermodynamics to describe the behavior of heat flow in deformable media. Starting from a classical thermodynamic approach in the material (Lagrangian)  description, a wide class of rate-type constitutive equations for the heat flux vector $\bq_\sR$ are obtained. Their thermodynamic consistency with the Second Law in the local form  \eqref{eq:CD} is established by exhibiting functional expressions of the specific free energy $\psi$ and entropy production rate $\sigma$. Some nonlinear anisotropic models are generated by means of a suitable Representation Lemma. For instance, heat conduction models with zero dissipation \eqref{Hypo_mat_2}, nonlinear Maxwell-Cattaneo-Vernotte-like equations \eqref{Hypo_mat_CM} and anisotropic equations of the Jeffreys type \eqref{eq:Jeffreys}. For each of these, 
restrictions on the material parameters are derived to ensure non-negativity of entropy production.

Linear models involving the notion of ``thermal displacement" (such as Green-Naghdi and Quintanilla constitutive equations) are reformulated here as equations of the rate-type, namely \eqref{eq:GNIII} and \eqref{eq:Q}, and their free energy and entropy production are explicitly determined as quadratic functions. When Quintanilla's model is considered, such quadratic functions are non trivial;
we also note that the necessary and sufficient condition on the material parameters for entropy production $\sigma$  to be non-negative (see Proposition \ref{qprop}) is equivalent to the estimate $iii)$ of \cite{Quintanilla} which guarantees the stability of the thermoelastic problem.
A new linear model, inspired by the constitutive equation of a Burgers fluid, is proposed  considering a mixture of two different substances both characterized by a Maxwell-Cattaneo type heat exchange mechanism.  This results in the rate equation \eqref{JCL} involving the second time derivative of the heat flux; its thermodynamic consistency is proved in Proposition \ref{Burcon}. For rigid bodies the corresponding third order temperature equation \eqref{JP_temp_eq}  was first proposed by Joseph and Preziosi in \cite{JP} based on the linearization of the Gurtin-Pipkin model.

Finally, linear and nonlinear Guyer-Krumhansl heat conduction models are derived using the Representation Lemma.
Their thermodynamic consistency with the Second Law in the nonlocal form \eqref{basic_ent_ineq} is established by exhibiting explicit expressions of the specific free energy $\psi$, entropy production rate $\zeta$ and extra entropy flux $\bk_\sR$. 
We believe this is the first demonstration that the Guyer-Krumhansl equation can be derived in the framework of classical thermodynamics provided that
material parameters depend on temperature. It is worth noting that necessary and sufficient conditions for entropy production $\zeta$  to be non-negative  imply their mutual functional dependence (see \eqref{parameters_dep} and \eqref{eq:consist} in the linear and nonlinear case, respectively).
Notably, this dependence can also be obtained within different frameworks (see, e.g., \cite{RKFAJ}).
Restrictions imposed on $\bq$ and its gradient by the no-flow boundary condition \eqref{strong_II_law}  for $\bk_\sR$ are discussed. The resulting analysis carries important suggestions for the choice of the most appropriate boundary conditions in applications.
In Remark \ref{6_1}, a connection with the evolution equation of the heat flux in the framework of extended irreversible
thermodynamics is established  (see \cite{JCVL,Van2015}).

A final remark is in order. In connection with the same heat conduction model (local or non-local) different schemes are possible.
As shown with the Jeffreys-type equation \eqref{eq:Jeffreys}, for instance, different free energy functions $\psi$ and entropy productions $\sigma$ are allowed with the same conclusion (see Remark \ref{non_uniq}). This in turn means that the free energy and the entropy production need not be unique for a given rate equation.


\end{document}